
\documentclass[nohyperref]{article}

\usepackage{microtype}
\usepackage{graphicx}
\usepackage{subfigure}
\usepackage{booktabs} 
\usepackage{balance}
\usepackage{hyperref}

\usepackage{csquotes}
\usepackage{bm}
\usepackage{multirow}

\usepackage[accepted]{icml2022}


\usepackage{amsmath}
\usepackage{amssymb}
\usepackage{mathtools}
\usepackage{amsthm}

\usepackage[capitalize,noabbrev]{cleveref}

\theoremstyle{plain}
\newtheorem{theorem}{Theorem}[section]
\newtheorem{proposition}[theorem]{Proposition}

\theoremstyle{definition}

\theoremstyle{remark}

\usepackage[textsize=tiny]{todonotes}

\icmltitlerunning{A Unified Perspective on Deep Equilibrium Finding}

\begin{document}

\twocolumn[
\icmltitle{A Unified Perspective on Deep Equilibrium Finding}



\icmlsetsymbol{equal}{*}

\begin{icmlauthorlist}
\icmlauthor{Xinrun Wang}{ntu}
\icmlauthor{Jakub \v{C}ern\'{y}}{equal,ntu}
\icmlauthor{Shuxin Li}{equal,ntu}
\icmlauthor{Chang Yang}{equal,nus}
\icmlauthor{Zhuyun Yin}{ntu}
\icmlauthor{Hau Chan}{unl}
\icmlauthor{Bo An}{ntu}
\end{icmlauthorlist}

\icmlaffiliation{ntu}{Nanyang Technological University}
\icmlaffiliation{nus}{National University of Singapore}
\icmlaffiliation{unl}{University of Nebraska-Lincoln}

\icmlcorrespondingauthor{Xinrun Wang}{xinrun.wang@ntu.edu.sg}

\icmlkeywords{Machine Learning, ICML}

\vskip 0.3in
]



\printAffiliationsAndNotice{\icmlEqualContribution} 

\begin{abstract}
Extensive-form games provide a versatile framework for modeling interactions of multiple agents subjected to imperfect observations and stochastic events. In recent years, two paradigms, policy space response oracles (PSRO) and counterfactual regret minimization (CFR), showed that extensive-form games may indeed be solved efficiently. Both of them are capable of leveraging deep neural networks to tackle the scalability issues inherent to extensive-form games and we refer to them as deep equilibrium-finding algorithms. Even though PSRO and CFR share some similarities, they are often regarded as distinct and the answer to the question of which is superior to the other remains ambiguous. Instead of answering this question directly, in this work we propose a unified perspective on deep equilibrium finding that generalizes both PSRO and CFR. Our four main contributions include: i) a novel response oracle (RO) which computes Q values as well as reaching probability values and baseline values; ii) two transform modules -- a pre-transform and a post-transform -- represented by neural networks transforming the outputs of RO to a latent additive space (LAS), and then the LAS to action probabilities for execution; iii) two average oracles -- local average oracle (LAO) and global average oracle (GAO) -- where LAO operates on LAS and GAO is used for evaluation only; and iv) a novel method inspired by fictitious play that optimizes the transform modules and average oracles, and automatically selects the optimal combination of components of the two frameworks. Experiments on Leduc poker game demonstrate that our approach can outperform both frameworks.

\end{abstract}

\section{Introduction}
Extensive-form games provide a versatile framework capable of representing multiple agents, imperfect information, and stochastic events~\cite{shoham2008multiagent}. A lot of research has been done in the domain of two-player zero-sum extensive-games, and the developed methods have been successfully applied to real-world defender-attacker scenarios~\cite{tambe2011security}, heads-up limit/no-limit hold’em poker~\cite{bowling2015heads,brown2018superhuman,moravvcik2017deepstack}, multiplayer hold’em poker~\cite{brown2019superhuman}, and no-press diplomacy~\cite{gray2020human,bakhtin2021no}. 
Both in theory and applications, two paradigms proved themselves superior to their alternatives in extensive-form game solving: policy space response oracles (PSRO)~\cite{lanctot2017unified,muller2019generalized} which provide a unified solution of fictitious play~\cite{heinrich2015fictitious} and double oracle~\cite{mcmahan2003planning,jain2011double} algorithms, and counterfactual regret minimization (CFR)~\cite{zinkevich2007regret,lanctot2009monte}. Despite their initial success, scaling both approaches to games of real-world sizes proved troublesome, as action spaces reached hundreds of thousands of actions. This motivated the use of universal approximators capable of efficient representation of computationally demanding functions required in both CFR and PSRO. The resulting algorithms are commonly referred to as \enquote{deep} alternatives of thereof because they employ deep neural networks as approximators~\cite{brown2019deep,srinivasan2018actor,moravvcik2017deepstack,steinberger2020dream}. Together they form the framework of \textbf{deep equilibrium finding} (DEF), which we define as: 
\begin{displayquote}
\emph{Finding the equilibrium of games through methods empowered by deep neural networks.}
\end{displayquote}


Even though PSRO and CFR share some similarities, there are some key differences: i) PSRO methods use Q value to obtain the policy, while CFR methods rely on counterfactual regret values; ii) PSRO use meta-strategies, e.g., a Nash strategy, to average the strategies over action probabilities, while CFR methods take the average strategy to approximate Nash equilibrium; and iii) the sampling methods for computing new responses differ as well -- PSRO methods take the new response itself to sample in the game and CFR methods use the previous responses. These differences make both frameworks often regarded as distinct and the research on each framework is to a large extent evolving independently.
A very natural and important question often asked in the literature is then which framework is superior to the other. There are no clear answers as we observe contradicting evidence -- fictitious play is claimed to be superior to CFR in~\cite{ganzfried2020fictitious} while the opposite conclusion is reached in~\cite{brown2019deep}. We believe no simple answer exists as the performance of both frameworks seems intrinsically linked to structural properties of games, and their relation to efficient computability remain unexplored. Therefore, we analyze an alternative question: \emph{can we represent the two frameworks in a unified way and select among them automatically?}

To this end, we propose a unified framework for deep equilibrium finding. Our Unified DEF (UDEF) relies on following four main contributions: i) a novel response oracle (RO) which can compute Q values, reaching probability values and baseline values, which provides necessary values for both PSRO and CFR algorithms; ii) two transform modules, i.e., pre-transform and post-transform, represented by neural networks, first transform the outputs of RO to a latent additive space (LAS), which can be the action probabilities as PSRO or counterfactual values as CFR, and then transform from LAS to action probabilities for execution; iii) two average oracles, i.e., local average oracle (LAO) and global average oracle (GAO), where LAO operates on LAS and GAO is used for evaluation only; and iv) a novel method inspired by fictitious play to optimize the transform modules and average oracles, which can automatically select the optimal combinations of components of the two frameworks. Experiments on Leduc poker game demonstrate that UDEF can outperform both original frameworks. We believe that this work unify and generalize the two largely independent frameworks and can accelerate the research on both frameworks from a unified perspective.


\section{Related Works}

The first line of works we base our approach on is derived from policy space response oracle (PSRO)~\cite{lanctot2017unified,muller2019generalized}, which is a generalization of the classic double oracle methods~\cite{mcmahan2003planning}. Given a game (e.g., poker), PSRO constructs a higher-level meta-game by simulating outcomes for all match-ups of a population of players’ policies. Then, it trains new policies for each player (via an oracle) against a distribution over the existing meta-game policies (typically an approximate Nash equilibrium, obtained via a meta-solver), appends these new policies to the meta-game population, and iterates. The widely-used neural fictitious self-play (NFSP)~\cite{heinrich2016deep} algorithm, which is an extension of fictitious play~\cite{brown1951iterative}, is a special case of PSRO. Recent works focus on improving the scalability~\cite{mcaleer2020pipeline,smith2020iterative}, improving the diversity of computed responses~\cite{perez2021modelling}, introducing novel meta-solvers, i.e., $\alpha$-rank and correlated equilibrium ~\cite{muller2019generalized,marris2021multi}, and applying to mean-field games~\cite{muller2021learning}. 



The second line is counterfactual regret (CFR) minimization, which is a family of iterative algorithms that are the most popular approach to approximately solve large imperfect-information games~\cite{zinkevich2007regret}. The original CFR algorithm relies on the traverse of the entire game tree, which is impractical for large-scale games. Therefore, sampling-based CFR variants ~\cite{lanctot2009monte,gibson2012generalized} are proposed to solve large-scale games, which allows CFR to update regrets on parts of the tree for a single agent. Nowadays, neural network function approximation is applied to CFR to solve larger games. Deep CFR~\cite{brown2019deep} is a fully parameterized variant of CFR that requires no tabular sampling. Single Deep CFR~\cite{steinberger2019single} removes the need for an average network in Deep CFR and thereby enabled better convergence and more efficient training. 
Other extensions also demonstrate the effectiveness of neural networks in CFR~\cite{li2019double,li2021cfr,schmid2021player}.

\section{Preliminaries}
We focus on fully rational sequential interactions represented using imperfect-information extensive-form games. Imperfect-information games is a general model capable of modelling finite multiplayer games with limited observations. This section first describes this model. Later on, we explain how the two renowned DEF algorithms leverage it to compute rational strategies.

\subsection{Imperfect-Information Games}
An imperfect-information game (IIG) ~\cite{shoham2008multiagent} is a tuple ($N, H, A, P, \mathcal{I}, u$), where $N = \{1,...,n\}$ is a set of players and $H$ is a set of histories (i.e., the possible action sequences). The empty sequence $\emptyset$ corresponds to a unique root node of game tree included in $H$, and every prefix of a sequence in $H$ is also in $H$. $Z \subset H$ is the set of the terminal histories. $A(h) = \{a:(h,a) \in H\}$ is the set of available actions at a non-terminal history $h \in H$. $P$ is the player function. $P(h)$ is the player who takes an action at the history $h$, i.e., $P(h) \mapsto P \cup \{c\}$. $c$ denotes the ``chance player'', which represents stochastic events outside of the players' control. If $P(h) = c$ then chance determines the action taken at history $h$. Information sets $\mathcal{I}_i$ form a partition over histories $h$ where player $i \in \mathcal{N}$ takes action. Therefore, every information set $I_i \in \mathcal{I}_i$ corresponds to one decision point of player $i$ which means that $P(h_1) = P(h_2)$ and $A(h_1) = A(h_2)$ for any $h_1, h_2 \in I_i$. For convenience, we use $A(I_i)$ to represent the set $A(h)$ and $P(I_i)$ to represent the player $P(h)$ for any $h \in I_i$. For $i \in N$, $u_i: Z \rightarrow \mathbb{R}$ specifies the payoff of player $i$.

A player's behavioral strategy $\sigma_i$ is a function mapping every information set of player $i$ to a probability distribution over $A(I_{i})$. A joint strategy profile $\sigma$ consists of a strategy for each player $\sigma_1, \sigma_2, ...$, with $\sigma_{-i}$ referring to all the strategies in $\sigma$ except $\sigma_i$. Let $\pi^{\sigma}(h)$ be the reaching probability of history $h$ if players choose actions according to $\sigma$. Given a strategy profile $\sigma$, the overall value to player $i$ is the expected payoff of the resulting terminal node, $u_i(\sigma) = \sum_{h \in Z}\pi^\sigma(h)u_i(h)$. We denote all possible strategies for player $i$ as $\Sigma_{i}$. 

The canonical solution concept is Nash Equilibrium (NE). The strategy profile $\sigma$ forms a NE between the players if \begin{equation}
    u_{i}(\sigma) \geq u_{i}(\sigma_{i}', \sigma_{-i}), \forall i\in N.
\end{equation}
To measure of the distance between $\sigma$ and NE, we define $\textsc{NashConv}_{i}(\sigma)=\max_{\sigma_{i}'} u_{i}(\sigma_{i}', \sigma_{-i}) - u_{i}(\sigma)$ for each player and $\textsc{NashConv}(\sigma)=\sum_{i\in N}\textsc{NashConv}_{i}(\sigma)$. Other solution concepts such as (coarse) correlated equilibrium require different measures~\cite{marris2021multi}. Though our method is general and may be applied to different solution concepts, we focus on NE in this paper. 

\subsection{PSRO and CFR}
We present a brief introduction to existing DEF algorithms. 
\paragraph{PSRO.} PSRO is initialized with a set of randomly-generated policies $\hat{\Sigma}_{i}$ for each player $i$. At each iteration of PSRO, a meta-game $M$ is built with all existing policies of players and then a meta-solver computes a meta-strategy, i.e., distribution, over polices of each player (i.e., Nash, $\alpha$-rank or uniform distributions). The joint meta-strategy for all players is denoted as $\bm{\alpha}$, where $\alpha_{i}(\sigma)$ is the probability that player $i$ takes $\sigma$ as his strategy. After that, an oracle computes at least one policy for each player, which is added into $\hat{\Sigma}_{i}$. We note when computing the new policy for one player, all other players' policies and the meta-strategy is fixed, which corresponds to a single-player optimization problem and can be solved by DQN or policy gradient algorithms. NFSP can be viewed as a special case of PSRO with the uniform distributions as the meta-strategies. 

\paragraph{CFR.} Let $\sigma^t_i$ be the strategy used by player $i$ in round $t$. We define $u_i(\sigma, h)$ as the expected utility of player $i$ given that the history $h$ is reached and then all players play according to strategy $\sigma$ from that point on. Let us define $u_i(\sigma, h\cdot a)$ as the expected utility of player $i$ given that the history $h$ is reached and then all players play according to strategy $\sigma$ except player $i$ who selects action $a$ in history $h$. Formally, $u_i(\sigma,h)=\sum_{z \in Z}\pi^\sigma(h,z)u_i(z)$ and $u_i(\sigma, h\cdot a)=\sum_{z \in Z}\pi^\sigma(h \cdot a,z)u_i(z)$.
The \textit{counterfactual value} $u_i^\sigma(I)$ is the expected value of an information set $I$ given that player $i$ attempts to reach it. This value is the weighted average of the value of each history in an information set. The weight is proportional to the contribution of all players other than $i$ to reach each history. Thus, $u_i^\sigma(I) = \sum_{h\in I} \pi^\sigma_{-i}(h)\sum_{z \in Z}\pi^\sigma(h,z)u_i(z)$.
For any action $a \in A(I)$, the counterfactual value of an action $a$ is $u_i^\sigma(I,a)= \sum_{h\in I} \pi^\sigma_{-i}(h)\sum_{z \in Z}\pi^\sigma(h \cdot a,z)u_i(z)$.
The \textit{instantaneous regret} for action $a$ in information set $I$ in iteration $t$ is $r^t(I,a)=u_{P(I)}^{\sigma^t}(I,a)-u_{P(I)}^{\sigma^t}(I)$. The cumulative regret for action $a$ in $I$ in iteration $T$ is $R^T(I,a) =\sum_{t=1}^{T}r^t(I,a)$. 
In CFR, players use \textit{Regret Matching} to pick a distribution over actions in an information set proportional to the positive cumulative regret on those actions. Formally, in iteration $T+1$, player $i$ selects actions $a\in A(I)$ according to probabilities
\begin{equation}
\sigma^{T+1}(I,a)=
\begin{cases}
\frac{R^T_{+}(I,a)}{\sum_{b \in A(I)}R^T_{+}(I,b)} & \text{if $\sum\limits_{b \in A(I)}R^T_{+}(I,b) >0$} \\
\frac{1}{|A(I)|}& \text{otherwise,} 
\end{cases}\nonumber
\end{equation}
where $R^T_{+}(I,a)=\max\{R^T(I,a),0\}$ because we are concerned about the cumulative regret when it is positive only.
If a player acts according to CFR in every iteration, then in iteration $T$, $R^T(I) \leq \Delta_i\sqrt{|A_i|}\sqrt{T}$ where $\Delta_i= \max_z u_i(z)-\min_zu_i(z)$ is the range of utility of player $i$. Moreover, 
$R^T_i \leq \sum_{I\in \mathcal{I}_i}R^T(I) \leq |\mathcal{I}_i|\Delta_i\sqrt{|A_i|}\sqrt{T}$. 
Therefore, $\lim_{T \rightarrow \infty}\frac{R^T_i}{T} = 0$. In two-player zero-sum games, if both players' average regret  $\frac{R^T_i}{T} \leq \epsilon$, their average strategies $(\overline{\sigma}^T_1,\overline{\sigma}^T_2)$ form a $2\epsilon$-equilibrium~\cite{waugh2009abstraction}. Most previous works focus on tabular CFR, where counterfactual values are stored in a table. Recent works adopt deep neural networks to approximate the counterfactual values and outperform their tabular counterparts~\cite{brown2019deep,steinberger2019single,li2019double,li2021cfr}. 
\section{Unified Deep Equilibrium Finding (UDEF)}
\begin{algorithm}[t]
\caption{Deep Equilibrium Finding (DEF)}
\label{alg:def}
\begin{algorithmic}[1]
\STATE policy lists for players $\mathcal{L}=\times\mathcal{L}_{k}$
\WHILE{TO is \textbf{False}}
\STATE Local AO (LAO) to obtain average policies $\bar{\mathcal{L}}=\times\bar{\mathcal{L}_{k}}$, which is used for the next iteration
\STATE RO to find new responses against opponents' policies
\ENDWHILE
\STATE Global AO (GLO) for evaluation.
\end{algorithmic}
\end{algorithm}
After introducing the context, we move to the main part of the paper where we present the unified perspective of the deep equilibrium finding. The existing DEF algorithms may be represented as Algorithm~\ref{alg:def}, which consists three oracles:
\begin{itemize}
\item \textbf{Response Oracle} (RO): The response oracle identifies an ``optimal'' response to the opponents' static policies. The response will specify the behavioral strategy at each infoset, based on either Q values as in PSRO or the counterfactual values as in CFR. 
\item \textbf{Average Oracle} (AO): There are two AOs. The local AO (LAO) is used to compute the sampling policy to sample the experiences from the game, and the global AO (GAO) is used for evaluation. Specifically, the LAO in PSRO can be Nash equilibrium strategies or the $\alpha$-rank distribution and the GAO is the last strategy used for sampling. In CFR, the LAO can be uniform or linear average, as well as the GAO. 
\item \textbf{Termination Oracle} (TO) determines whether to end the learning process or continue, which is most often driven by the chosen solution concept, e.g., the exploitability in case of the Nash equilibrium. 
\end{itemize}
The structure of the unified DEF (UDEF) framework proposed in this work is displayed in Figure~\ref{fig:ro_with_tran}. After
RO
which computes the new response (denoted as new resp.) we plug the pre-transform module (denoted as pre-tran.)
to interpret the results to an linear additive space (LAS), where LAO
can take the average of the responses -- the historical responses 
(denoted as his. resp.) and the new response (new resp.) -- in the hidden space. Then, we plug the post-transform module (denoted as post-tran.) to transform the results from the linear additive space to the policy which is used in the next iteration. The GAO is also used for evaluation only. 

\begin{figure}[ht]
\centering
\includegraphics[width=0.35\textwidth]{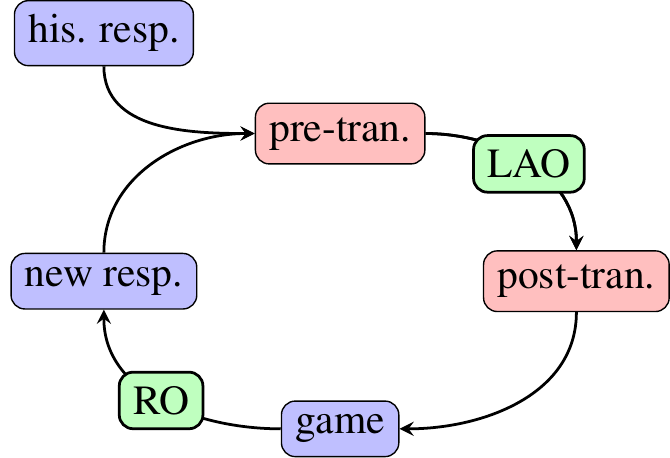}
\caption{Unified Deep Equilibrium Finding (UDEF)}
\label{fig:ro_with_tran}
\end{figure}


\subsection{Design of Oracles and Modules in UDEF}
Here we describe the design of the neural networks in the oracles and modules in UDEF, which plays a fundamental role in generalizability of UDEF. 
\paragraph{Response Oracle.}
The response oracle computes the player's new policy against the average policy of the opponents. In PSRO and its variants, the response oracle returns a Q-value network, where the action with the maximum Q value is taken at each infoset. In CFR and its variants, the response oracle returns the counterfactual regret values, and the action is picked according to regret matching. We observe that the counterfactual regret values may be computed from the Q values, with a baseline value (BV) computed through the sampling policy. For MC-CFR~\cite{Lanctot09mccfr}, the reaching probability is also important for computing the counterfactual values. Any attempt for unification thus relies on estimating both the Q values and the reaching probabilities (RP). Due to estimation errors, when the infoset is closer to the root node, the reaching probability estimation is more accurate; when the infoset is closer to the terminal nodes, the Q value estimation is more accurate. With both Q value and reaching probability, we may hence balance the usage of both for more accurate decision making. For computing the counterfactual regret values from Q values, we introduce a policy network to track the sampling policy and use it to compute the baseline values efficiently. For their usage in RO, the Q values and RP values are obtained from the Q and RP networks, respectively, and the BV is calculated by taking the average of Q values and the action probabilities output by the policy network. To summarize, the RO includes three networks: i) a Q value network to estimate the Q value\footnote{A practical implementation of DQN~\cite{mnih2015human} employs an online Q network and a target Q network for TD-learning.}, ii) an RP network to estimate the reaching probability of the current infoset, and iii) a policy network which tracks the sampling policy and is used for computing counterfactual regret values. 
\paragraph{Average Oracles.}

We design two average oracles. For the local average oracle, we use the network introduced in~\cite{feng2021discovering} and generalize it to represent the linear average function used in CFR variants. For GAO, we can also use the same structure as LAO to make the oracle differentiable, or use explicit functions as adopted in PSRO and CFR. 

\paragraph{Transform Modules.}

The two transform modules play important roles in generalizability of the UDEF framework. The pre-transform module takes the Q values, the reaching probabilities (RP) and the baseline values (BV), which is the weighted average of the Q values with the weights outputed by the policy network in RO, as the input and has two branches as the outputs. For the first branch, we use Parametric Rectified Linear Unit (\texttt{PReLU})~\cite{he2015delving} as the activation function for each dimension of LAS where
\begin{align}
\text{PReLU}(y_{i})=
\begin{cases}
y_{i}, & \text{if} \quad y_{i}\geq 0 \\
a_{i}\cdot y_{i}, & \text{if} \quad y_{i}< 0,
\end{cases}
\end{align}
where $a_{i}$ is optimized. We note that if $a_{i}=0$, PReLU becomes similar to CFR$^{+}$ and if $a_{i}=1$, it is equal to vanilla CFR. The same approach is adopted in~\cite{srinivasan2018actor}. The second branch uses \texttt{softmax} as the activation function. The post-transform module transforms the values in LAS to the probability distribution over legal actions. Therefore, we use \texttt{softmax} function as the activation function of the final layer. To handle the illegal actions, we include a large negative value as the penalty to the values of the illegal actions before the \texttt{softmax} operation of both pre-transform and post-transform modules, which ensures the probability of selecting illegal actions remains zero. 


\subsection{Optimization of UDEF}
Given the architecture of different modules in UDEF, in this section, we provide the details of the training procedure. We first introduce the pretraining scheme of the AOs and transform module to speed up the training and then describe how the modules are optimized.

\textbf{Pretraining.} The random initialization of the AOs and the transform modules may result in the training being difficult. We hence pretrain them with either PSRO or CFR as the staring point. The two pretraining scheme are:
\begin{itemize}
\item \emph{CFR-pretraining.} We first pre-train the AO to a linear average function. We also pretrain the pre-transform module to compute the counterfactual regret values and the post-transform module to do the regret-matching. 
\item \emph{PSRO-pretraining.} We first pre-train the AO to the following solution concepts: Nash strategy, uniform strategy, and $\alpha$-rank strategy. The pre-transform module and the post-transform module are pre-trained to be softmax and identity function, respectively. 
\end{itemize}
\begin{algorithm}
\caption{Optimization of RO}
\label{alg:dqn_rp}
\begin{algorithmic}[1]
\STATE Initialize the Q and target Q network $\theta$ and $\bar{\theta}$
\STATE Initialize the RP network $\phi$ and the policy network $\zeta$
\WHILE{Not terminated}
\STATE Simulate the game with sampling policies of players, and push $\langle s, a, s', r, rp_{s}, rp_{s'}\rangle$ to replay buffer\label{alg:dqn_rp_rollout}
\STATE Distill sampling policy into a new policy network $\zeta'$~\label{alg:dqn_rp_distill}
\STATE Train the Q network by minimizing the TD-error:
$
\theta = \theta - \nabla [r+\mathbb{E}[\bar{\theta}(s')]-\theta(s)]^{2}
$~\label{alg:dqn_rp_td}
\STATE Update the target Q network:  $\bar{\theta}=\tau \bar{\theta}+(1-\tau)\theta$
\STATE Train the RP network by minimizing the MSE error: \\
$\phi = \phi - \lambda_{\phi} \nabla [(rp_{s}-\psi(s))^{2} + (rp_{s'}-\psi(s'))^{2}]
$~\label{alg:dqn_rp_rp}
\STATE Update $\zeta$ with $\zeta'$~\label{alg:dqn_rp_zeta}
\ENDWHILE
\end{algorithmic}
\end{algorithm}
\paragraph{Optimizing RO.} The algorithm for optimizing the RO is shown in Algorithm~\ref{alg:dqn_rp}. We first simulate the game by sampling policies of players to generate new experiences, both the transition of states and the reaching probability into the replay buffer (Line~\ref{alg:dqn_rp_rollout}). One of the key differences of PSRO and CFR lies in the sampling method for generating new experiences: PSRO uses the new response to sample the experiences, while CFR uses the historical responses. We hence introduce a smooth parameter $hs\in[0, 1]$ to balance the sampling with historical responses and the new response. If $hs=0$, we use the historical responses with the transform modules for generating the new experience, which corresponds to the sampling method of CFR. If $hs=1$, we use the new response with the transform modules. After generating the new experiences, we distill the sampling policy into a new policy network $\zeta'$~(Line~\ref{alg:dqn_rp_distill}), where the policy network is the sampling of the current iteration and is used for computing baseline values in the next iteration. We use reinforcement learning methods, e.g., DQN~\cite{mnih2015human}, to estimate the Q values through minimizing the TD-error where $\mathbb{E}[\bar{\theta}(s')]$ is the expected Q values of $s'$~(Line~\ref{alg:dqn_rp_td}).
We use the sampling policy to compute $\mathbb{E}[\bar{\theta}(s')]$, which differs from the $max$ action used in DQN~\cite{mnih2015human}. Therefore, we distill the policy into $\zeta'$ first and only update $\zeta$ after training both Q and RP networks~(Line~\ref{alg:dqn_rp_zeta}).
We use the supervised method to train the RP network. The RP network is then employed to predict the reaching probabilities of both $s$ and $s'$ for a transition~(Line~\ref{alg:dqn_rp_rp}).

\begin{algorithm}
\caption{Optimizing of AOs and transform modules}
\label{alg:meta_optim}
\begin{algorithmic}[1]
\STATE Given the current joint strategy profile $\sigma$, the two AOs and two transform modules' parameters $\psi=\{\psi_{lao}, \psi_{gao}, \psi_{pre}, \psi_{post}\}$
\FOR{$i\in N$}
\STATE Compute the best response $\sigma'_{i}$ against $\sigma_{-i}$
\ENDFOR
\STATE Train the parameters in $\psi$ to approximate the joint policy $\eta\sigma_{i}+(1-\eta)\sigma'_{i}$ with the loss of action probability
\end{algorithmic}
\end{algorithm}
\paragraph{Optimizing AOs and Transform Modules.} Our objective is to optimize the AO and the two transform modules to find the optimal parameters of UDEF~\cite{feng2021discovering}. However, it is difficult to compute the gradients of $\textsc{NashConv}(\sigma)$ in IIGs, especially for large domains. Therefore, we employ the idea of fictitious play~\cite{heinrich2015fictitious}. Suppose that the current policy for evaluation is $\sigma$, and $\sigma'_{i}$ is a new best response against $\sigma_{-i}$. We optimize the three modules to approximate the policy $\eta\sigma_{i}+(1-\eta)\sigma'_{i}$, where $\eta$ is a smoothing parameter. The details of the process are described in Algorithm~\ref{alg:meta_optim}, and proposition~\ref{prop:exploit} proves that it decreases $\textsc{NashConv}(\sigma)$. After the update of LAO, GAO and transform modules, we recompute the meta-game, as well as the values for LAO and GAO, and start a new iteration. We may also use the idea of PSRO, where the meta-game is built between the policy $\sigma$ and new responses $\sigma'_{i}, \forall i\in N$, and NE or $\alpha$-rank distribution~\cite{lanctot2017unified,muller2019generalized} is used to generate the target policy during the optimization. We will investigate this idea and other efficient optimization methods in future works.

\begin{proposition}
\label{prop:exploit}
The $\textsc{NashConv}(\sigma)$  is decreased through approximating the new average policy $\eta\sigma_{i}+(1-\eta)\sigma'_{i}$.
\end{proposition}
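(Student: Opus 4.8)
The plan is to show that the fictitious-play-style update strictly reduces $\textsc{NashConv}$ whenever $\sigma$ is not already an equilibrium, by analyzing $\textsc{NashConv}$ along the segment from $\sigma$ toward the best-response profile. Writing $\lambda = 1-\eta$, I would define the path $\hat{\sigma}(\lambda)$ with $\hat{\sigma}_i(\lambda) = \sigma_i + \lambda(\sigma'_i - \sigma_i)$, so $\lambda = 0$ recovers $\sigma$ and the proposition amounts to $\textsc{NashConv}(\hat{\sigma}(\lambda)) < \textsc{NashConv}(\sigma)$ for a suitable $\lambda \in (0,1]$. To make payoffs behave linearly under this mixing I would work in the sequence-form (realization-plan) representation rather than directly with behavioral strategies, and specialize to the two-player zero-sum setting, where $\textsc{NashConv}(\sigma) = \max_{\tau_1} u_1(\tau_1,\sigma_2) - \min_{\tau_2} u_1(\sigma_1,\tau_2)$ is exactly the nonnegative duality (exploitability) gap.

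First I would compute the one-sided directional derivative of $\textsc{NashConv}$ at $\lambda=0$ along $\sigma'-\sigma$. Writing $V(\sigma_2)=\max_{\tau_1}u_1(\tau_1,\sigma_2)$ and $W(\sigma_1)=\min_{\tau_2}u_1(\sigma_1,\tau_2)$, the key observation is that the maximizer defining $V(\sigma_2)$ is $\sigma'_1$ and the minimizer defining $W(\sigma_1)$ is $\sigma'_2$, by definition of the best responses. Danskin's (envelope) theorem together with the bilinearity of $u_1$ then gives $\tfrac{d}{d\lambda}V(\hat{\sigma}_2)|_{0} = u_1(\sigma'_1,\sigma'_2)-u_1(\sigma'_1,\sigma_2)$ and $\tfrac{d}{d\lambda}W(\hat{\sigma}_1)|_{0} = u_1(\sigma'_1,\sigma'_2)-u_1(\sigma_1,\sigma'_2)$. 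Subtracting, the common term $u_1(\sigma'_1,\sigma'_2)$ cancels and I obtain
\[
\frac{d}{d\lambda}\,\textsc{NashConv}(\hat{\sigma}(\lambda))\Big|_{\lambda=0} = u_1(\sigma_1,\sigma'_2)-u_1(\sigma'_1,\sigma_2) = W(\sigma_1)-V(\sigma_2) = -\,\textsc{NashConv}(\sigma).
\]

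Since this directional derivative equals $-\textsc{NashConv}(\sigma)\le 0$ and is strictly negative precisely when $\sigma$ is not a Nash equilibrium, and since $\textsc{NashConv}$ is continuous (piecewise-bilinear) along the segment, a strictly negative slope at $\lambda=0$ yields a threshold $\bar{\lambda}>0$ with $\textsc{NashConv}(\hat{\sigma}(\lambda)) < \textsc{NashConv}(\sigma)$ for all $\lambda \in (0,\bar{\lambda}]$, i.e.\ for all $\eta$ sufficiently close to $1$ --- exactly the claimed decrease. As a complementary non-infinitesimal estimate I would use convexity of $V$ and concavity of $W$ to get $\textsc{NashConv}(\hat{\sigma}(\lambda)) \le \eta\,\textsc{NashConv}(\sigma) + (1-\eta)\,[V(\sigma'_2)-W(\sigma'_1)]$, which quantifies how large a step may safely be taken.

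I expect two genuine obstacles. The first is the mismatch between mixing behavioral strategies and mixing realization plans: the reach probability $\pi^{\hat{\sigma}_i}(h)$ is a product over the infosets on the path to $h$ and hence a polynomial, not an affine function, of $\lambda$, so the bilinearity I rely on holds only if the ``average policy'' is read in the reach-weighted (realization-equivalent) sense used by CFR's average strategy; I would make this interpretation explicit. The second is the step beyond two-player zero-sum: for $n>2$ or general-sum games the analogous computation produces cross terms among the $\sigma'_j-\sigma_j$ that do not cancel, and $\textsc{NashConv}$ need not decrease, so the clean statement really lives in the two-player zero-sum regime (or needs extra structure such as a potential). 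I would also handle non-unique best responses by using the one-sided directional derivative (a max/min over the argmax set) in place of a plain gradient, which leaves the conclusion unchanged since every selection yields the same value $-\textsc{NashConv}(\sigma)$.
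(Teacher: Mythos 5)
Your argument is genuinely different from the paper's, and where it applies it is considerably more rigorous. The paper's proof is a two-line, per-player linearity argument for general $n$-player games: holding the opponents $\sigma_{-i}$ \emph{fixed}, the mixture $\eta\sigma_{i}+(1-\eta)\sigma'_{i}$ earns $\eta u_{i}+(1-\eta)u'_{i}>u_{i}$, and the paper concludes that each $\textsc{NashConv}_{i}$ drops. What that argument never addresses is that in the actual update (Algorithm 3) every player moves simultaneously, so both terms of $\textsc{NashConv}_{i}(\sigma)$ --- the best-response value $\max_{\tau_{i}}u_{i}(\tau_{i},\sigma_{-i})$ and the realized utility $u_{i}(\sigma)$ --- change through $\sigma_{-i}$ as well. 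Your Danskin computation in the two-player zero-sum sequence form is aimed exactly at this coupling: the cross terms cancel, the one-sided slope at $\lambda=0$ is $-\textsc{NashConv}(\sigma)$, and a strict decrease follows for $\eta$ close to $1$. Both of your caveats are well placed: the realization-plan reading of the mixture is needed by the paper's proof too (otherwise the utility of the behavioral mixture is not $\eta u_{i}+(1-\eta)u'_{i}$), and your observation that the cross terms do not cancel outside two-player zero-sum identifies precisely why the paper's per-player reasoning cannot by itself support its blanket claim. The trade-off: the paper buys generality and brevity at the cost of rigor; you buy rigor at the cost of restricting to two-player zero-sum and to $\eta$ near $1$, whereas the proposition (and the experiments, which fix $\eta=0.95$) implicitly wants a fixed step size.

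However, your final claim --- that non-unique best responses ``leave the conclusion unchanged since every selection yields the same value $-\textsc{NashConv}(\sigma)$'' --- is wrong, and this is a genuine gap. Danskin's theorem gives the one-sided derivative of $V$ as the \emph{maximum} over the whole argmax set and that of $W$ as the \emph{minimum} over the whole argmin set; under ties these need not agree with the value obtained by plugging in the particular selection $(\sigma'_{1},\sigma'_{2})$, and the true slope can exceed $-\textsc{NashConv}(\sigma)$ and even be positive. Concretely, in matching pennies take $\sigma_{1}$ uniform and $\sigma_{2}=(0.6,0.4)$: player 1's unique best response is $\sigma'_{1}=H$, while player 2 is indifferent among all strategies, so $\sigma'_{2}=H$ is a legitimate best response. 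Then $\textsc{NashConv}(\sigma)=0.2$, yet along your segment $\textsc{NashConv}(\hat{\sigma}(\lambda))=0.2+1.8\lambda$ for every $\lambda\in(0,1]$: exploitability strictly \emph{increases}, because player 1 becoming predictable makes him exploitable and the indifferent player 2 moved the wrong way. (Even freezing player 2, as the paper's ``modules are not updated'' clause would do, yields $0.2+\lambda$, still increasing --- so degenerate ties defeat the paper's statement as well.) Your proof therefore needs an explicit hypothesis ruling this out, e.g.\ uniqueness of best responses at $\sigma$, or a cautious selection such as a maximin choice over the best-response set; with such a hypothesis, the rest of your argument goes through as stated.
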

\begin{proof}
For each player $i$, suppose the utility of playing $\sigma_{i}$ against $\sigma_{-i}$ is $u_{i}$ and the utility of playing $\sigma'_{i}$ against $\sigma_{-i}$ is $u'_{i}$. Because $\sigma'_{i}$ is the best-response, $u'_{i}\geq u_{i}$. If $u'_{i}= u_{i}$, $\textsc{NashConv}_{i}(\sigma)=0$ for $i$ and the modules are not updated. In case $u'_{i}>u_{i}$, the utility of playing $\eta\sigma_{i}+(1-\eta)\sigma'_{i}$ is $\eta u_{i} + (1-\eta) u'_{i}$, which is strictly greater than $u_{i}$. Therefore, the $\textsc{NashConv}_{i}(\sigma)$ of player $i$ is decreased when training AO and transform modules to approximate the policy $\eta\sigma_{i}+(1-\eta)\sigma'_{i}$. This result holds for each player, $\textsc{NashConv}(\sigma)$ thus can be decreased. 
\end{proof}

\section{Analysis of UDEF}
In this section, we prove the equivalence of UDEF to PSRO and CFR under specific configurations. As such, UDEF unifies and generalizes the two main directions of DEF.

\begin{proposition}[UDEF $\rightarrow$ PSRO]
Let the pre-transform module be equal to softmax and the post-transform module to the identity function. Then UDEF reduces to PSRO.
\end{proposition}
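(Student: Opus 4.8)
The plan is to prove the reduction by componentwise matching: both PSRO and UDEF are instances of the generic loop in Algorithm~\ref{alg:def}, so it suffices to check that, under the stated configuration, each UDEF oracle computes exactly the quantity that the PSRO stage of Section 3.2 prescribes. I would organize the argument around the three stages -- response generation (RO plus pre-transform), averaging (LAO), and execution (post-transform) -- and conclude by matching the termination oracle (TO) and the evaluation oracle (GAO), which play identical roles in the two instantiations.

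First I would pin down the latent additive space. With the pre-transform fixed to softmax, the RO's Q-value output at an infoset $I$ is sent to $\mathrm{softmax}(Q(I,\cdot))$, a distribution over $A(I)$; hence the LAS coincides with the action-probability simplex and every response -- the historical responses as well as the new one -- is represented directly as a behavioral strategy. This is exactly the object PSRO manipulates, since PSRO extracts its policy from Q values and then, as emphasized in the introduction, works \emph{over action probabilities}. Next I would show that LAO, acting on this LAS, realizes the PSRO meta-solver: the meta-strategy $\bm{\alpha}$ weights the population and the effective policy is the $\bm{\alpha}$-weighted combination in action-probability space, and since LAO is constructed to represent generalized linear averaging over the LAS following~\cite{feng2021discovering}, it can produce precisely this weighted average with $\bm{\alpha}$ instantiated as the Nash, $\alpha$-rank, or uniform distribution. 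Finally, with the post-transform equal to the identity, the averaged LAS element passes through unchanged and is used as the next-iteration and execution policy, matching PSRO, which plays the $\bm{\alpha}$-weighted mixture directly; to align the sampling step of the RO one additionally sets the smoothing parameter $hs=1$ so that new experiences are drawn from the new response, as PSRO does.

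I expect the main obstacle to be reconciling the softmax response with PSRO's textbook best response, which takes the $\arg\max$ of Q at each infoset rather than a softmax distribution. I would address this at the level of the LAS representation: what the equivalence actually demands is that responses live in action-probability space and are combined there, and softmax is the canonical Q-to-policy map that places them there, with the deterministic $\arg\max$ response recovered in the low-temperature limit. A secondary check is that in this configuration the pre-transform genuinely discards its RP and BV inputs, so that the LAS element depends on Q alone, as PSRO requires.
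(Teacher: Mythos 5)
Your proposal is correct and follows essentially the same componentwise-matching argument as the paper: softmax pre-transform places all responses in action-probability space (so LAS is the simplex), LAO realizes the meta-solver's weighted average over that space with Nash/uniform/$\alpha$-rank weights, and the identity post-transform outputs the mixture as the execution policy. Three details of the comparison are worth recording. First, on the sampling parameter you set $hs=1$ so that new experiences come from the new response; this is consistent with the paper's own definition of $hs$ in the RO-optimization section (``if $hs=1$, we use the new response''), whereas the paper's proof of this proposition writes $hs=0$, contradicting that definition --- your convention is the one that matches the stated semantics. Second, the paper additionally pins down the GAO as the vector $[0,\dots,0,1]$ (evaluation uses only the last policy), while you merely assert that TO and GAO ``play identical roles'' in both instantiations; since the proposition's hypotheses constrain only the two transform modules, a complete reduction does need GAO (and LAO) fixed explicitly, so you should state this rather than leave it implicit. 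Third, your handling of the softmax-versus-$\arg\max$ tension (deterministic best response recovered in the low-temperature limit) addresses a point the paper's proof silently glosses over by simply identifying the softmax distribution over Q values with the PSRO policy; that is a genuine, if small, improvement in rigor.
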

\begin{proof}
As RO outputs the Q values, when the pre-transform module is the softmax, the output of the pre-transform module is a distribution over actions, which corresponds to the behavioral strategy. Note that the Q values of illegal actions are penalized with a large negative value, which ensures that the probabilities of illegal action are always zero. PSRO will operate on the action probabilities, i.e., the values in LAS correspond to the action probabilities. After applying the LAO on the outputs of all responses, where the values can correspond to a uniform distribution, an NE strategy or an $\alpha$-rank distribution, we obtain the averaged values in LAS, which is still a distribution over actions. If the post-transform module is an identity mapping, the output of this oracle is exactly the softmax distribution over Q values. Because PSRO makes computing the best-response a single-agent RL problem, we use only the Q values from the new response for sampling the new experiences, i.e., $hs=0$. PSRO uses the policy the opponent responses to for evaluation, therefore, the GAO is a vector with zero values everywhere except for the last element, which is one.
\end{proof}

\begin{proposition}[UDEF $\rightarrow$ CFR]
Let the pre-transform module map Q$\rightarrow$ Regret values and the post-transform module represent regret matching. Then UDEF reduces to CFR.
\end{proposition}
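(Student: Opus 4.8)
The plan is to mirror the structure of the preceding UDEF $\to$ PSRO proof, tracing each UDEF component under the stipulated configuration and verifying that it reproduces the corresponding CFR quantity. I would walk through the pipeline RO $\to$ pre-transform $\to$ LAS $\to$ LAO $\to$ post-transform, then fix the sampling parameter and the GAO, showing at each stage that the UDEF computation coincides with the defining equations of CFR given in the preliminaries, namely the counterfactual values $u_i^\sigma(I)$ and $u_i^\sigma(I,a)$, the instantaneous and cumulative regrets $r^t(I,a)$ and $R^T(I,a)$, and the regret-matching rule.

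First I would establish the key correspondence at the RO. The RO outputs a Q value $Q(I,a)$, a reaching probability $\text{RP}(I)$, and a baseline value $\text{BV}(I)$. Crucially, the RO's Q network is trained by bootstrapping against the expected continuation value under the sampling policy rather than against the $\max$ (Algorithm~\ref{alg:dqn_rp}, Line~\ref{alg:dqn_rp_td}), so $Q(I,a)$ estimates the strategy-weighted continuation utility $u_i(\sigma, h\cdot a)$ rather than a best-response value; the baseline $\text{BV}(I)=\sum_a \sigma(I,a)Q(I,a)$ then plays the role of the infoset value $u_i^\sigma(I)$, and $\text{RP}(I)$ supplies the counterfactual reaching weight $\pi_{-i}^\sigma$. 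I would then show that setting the pre-transform to $\text{RP}(I)\cdot\big(Q(I,a)-\text{BV}(I)\big)$ reproduces exactly the instantaneous counterfactual regret $r^t(I,a)=u_{P(I)}^{\sigma^t}(I,a)-u_{P(I)}^{\sigma^t}(I)$, so that the LAS becomes the space of counterfactual regret values. The PReLU branch realizes the positive-part clipping, recovering vanilla CFR at $a_i=1$ and CFR$^+$ at $a_i=0$ as already noted in the text.

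Next I would identify the remaining components. The LAO, configured as the linear/uniform average that the paper states it can represent, accumulates the per-iteration regrets and thus yields $R^T(I,a)$ up to a positive normalizing constant that regret matching cancels; the post-transform, set to regret matching, maps these cumulative regrets through $R^T_+(I,a)/\sum_b R^T_+(I,b)$ (uniform when all entries are nonpositive) to the next-iteration strategy $\sigma^{T+1}(I,a)$. Because CFR samples with the historical responses rather than the new response, I would fix $hs=0$, consistent with the sampling convention stated in the optimization section, and set the GAO to the average strategy over iterations, which is precisely the quantity shown to converge to a $2\epsilon$-equilibrium by the regret bound in the preliminaries. Illegal actions are penalized with a large negative value before any normalization, exactly as in the PSRO proof, so their probabilities stay zero.

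The main obstacle I anticipate is the RO-to-regret correspondence in the second step: I must argue carefully that the RO's estimated quantities combine into the \emph{counterfactual} regret, not merely an ordinary advantage. This hinges on two points I would make explicit -- that the expected (sampling-policy-weighted) bootstrapping is what makes $Q$ and $\text{BV}$ take the roles of $u_i^\sigma(I,a)$ and $u_i^\sigma(I)$, and that the reaching-probability factor supplies the $\pi_{-i}^\sigma$ weighting and, in the Monte-Carlo regime, the importance weights required by MC-CFR. Establishing this identification rigorously is the crux; the remaining steps are then bookkeeping that matches each UDEF module to its CFR counterpart.
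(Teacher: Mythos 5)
Your proposal is correct and follows essentially the same module-by-module correspondence as the paper's proof: the RO supplies Q, RP and BV; the pre-transform yields counterfactual regrets (your explicit formula $\mathrm{RP}(I)\cdot\bigl(Q(I,a)-\mathrm{BV}(I)\bigr)$ makes precise what the paper only calls ``a linear combination of Q values, RP values, and the BV,'' and your remark that regret matching is invariant to the positive scaling introduced by averaging rather than summing regrets is a useful detail the paper omits); the LAO is the uniform/linear average, the post-transform is regret matching, and the GAO is the average strategy. The one apparent discrepancy is the sampling parameter --- you set $hs=0$ while the paper's proof sets $hs=1$ --- but this traces to an inconsistency in the paper itself: its own definition in the optimization section ($hs=0$ meaning sampling with historical responses, the CFR convention) supports your choice, and your substantive claim that CFR samples from historical responses is exactly what the paper's proof also asserts.
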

\begin{proof}
The Q values, RP values and the BV are obtained from the RO. The training procedure displayed in Algorithm~\ref{alg:dqn_rp} ensures that the BV is the average value of the Q values weighted by the sampling policy. The counterfactual regret value is a linear combination of Q values, RP values, and the BV. As such, it can be approximated by the pre-transform module represented by a neural network. The LAO is uniform in vanilla CFR~\cite{zinkevich2007regret} and a linear average in linear CFR~\cite{brown2019solving}. Both of the LAOs can be approximated by a neural network as well. The post-transform module implements regret matching, which can also be approximated by a neural network. One of the key differences between PSRO and CFR lies in the fact that the policy CFR uses for evaluation differs from the policy the opponent plays against. The GAO in CFR is the uniform distribution in vanilla CFR and the linear average in the linear CFR, which can both be also represented by a neural network. Because CFR utilizes the historical responses to generate new experiences, we set $hs=1$ as the sampling policy. 
\end{proof}

\begin{table}[ht]
\caption{Summary of different algorithms.} 
\label{tab:summary}
\centering
\begin{tabular}{c|l|c|c|c|c}
\toprule\toprule
\multicolumn{2}{c|}{} & NFSP & PSRO & CFR & LCFR\\
\midrule
\multicolumn{2}{c|}{RO}& \multicolumn{2}{c|}{DQN} & \multicolumn{2}{c}{DQN+RP+BV}\\
\midrule
\multirow{2}{*}{AO}&LAO & Uni. & Nash& Uni. & Lin.\\
\cmidrule{2-6}
&GAO & \multicolumn{2}{c|}{$[0, \dots, 1]$} & Uni.&Lin.\\
\midrule
\multicolumn{2}{c|}{Pre-tran.} & \multicolumn{2}{c|}{Softmax} & \multicolumn{2}{c}{Regret}\\
\cmidrule{1-6}
\multicolumn{2}{c|}{Post-tran.} & \multicolumn{2}{c|}{Identity} & \multicolumn{2}{c}{Match} \\
\midrule
\multicolumn{2}{c|}{Sampling} & \multicolumn{2}{c|}{new response} & \multicolumn{2}{c}{historical responses} \\
\bottomrule\bottomrule
\end{tabular}
\end{table}

A summary of the correspondences of different algorithms with the configurations of UDEF is depicted in Table~\ref{tab:summary}. Whenever the average oracles and transform modules are represented by neural networks, we may efficiently shift between different algorithms in a differentiable way. UDEF hence operates on a much more general space than any of the existing algorithms, which can be potentially used to search for more powerful equilibrium finding algorithms. 

\section{Experiments}

\begin{figure*}[!htbp]
\centering
\subfigure[NFSP with $hs=0.1$]{\includegraphics[width=0.325\textwidth]{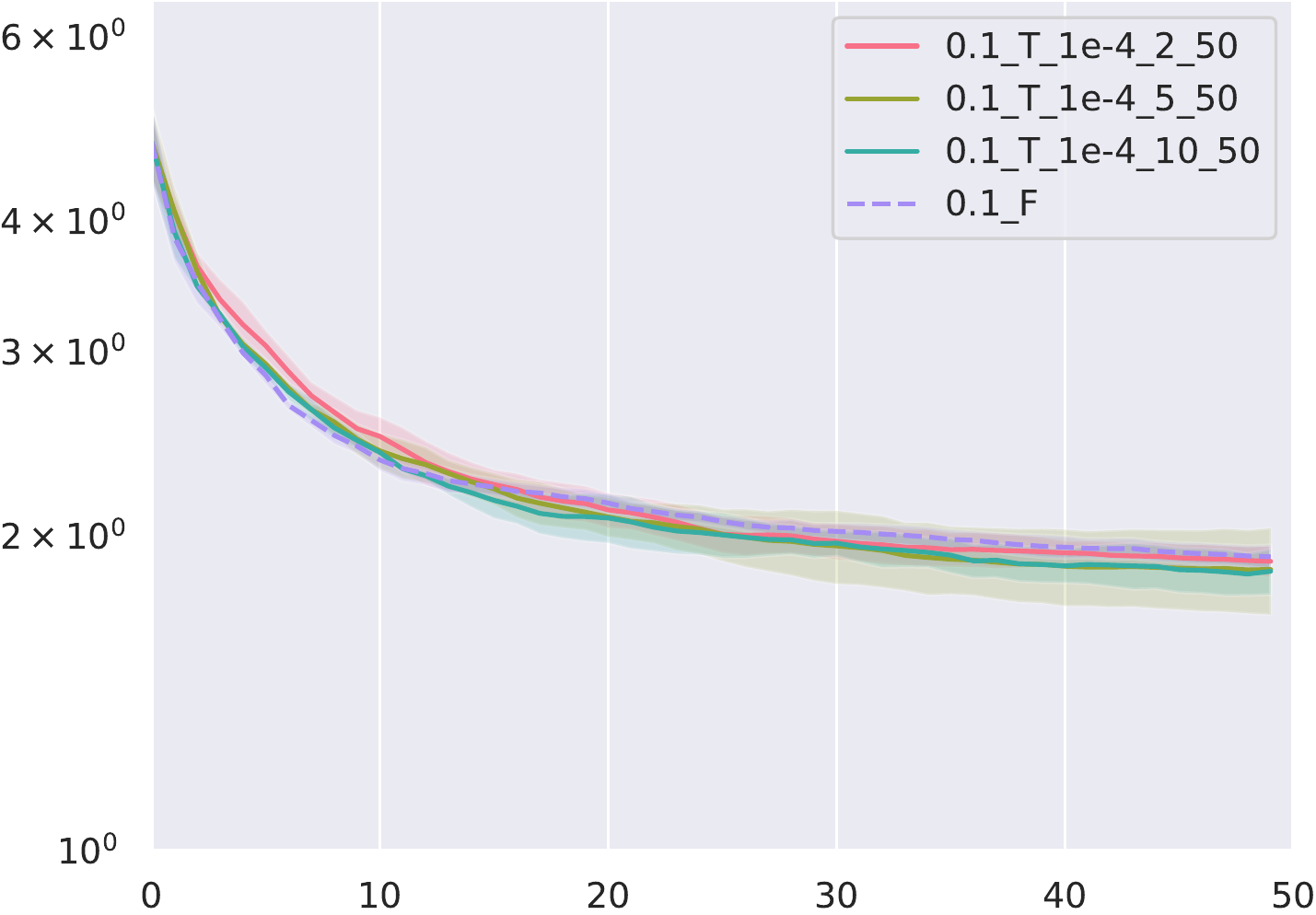}\label{fig:udef_nfsp_0_1}}
\subfigure[NFSP with $hs=0.5$]{\includegraphics[width=0.325\textwidth]{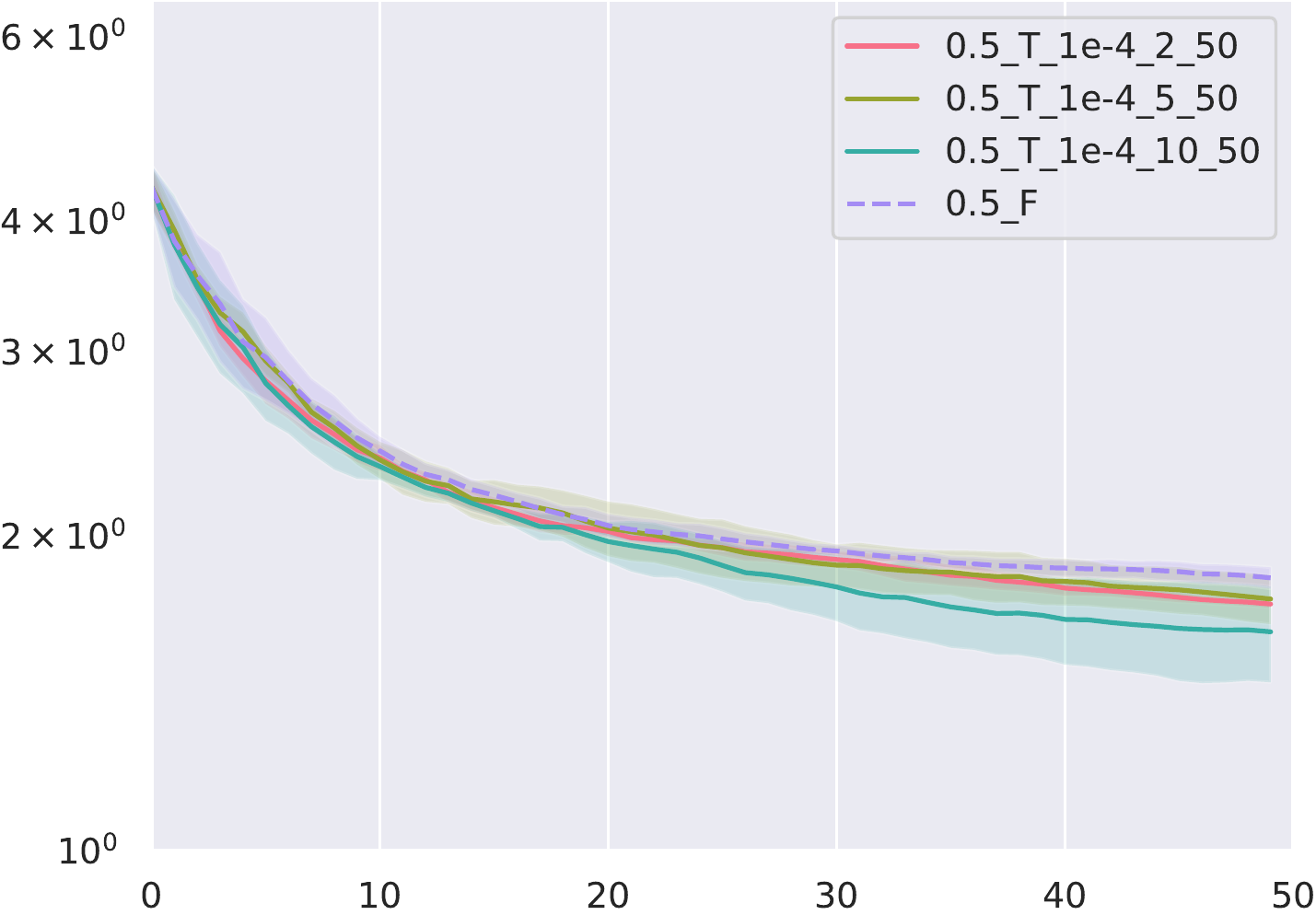}\label{fig:udef_nfsp_0_5}}
\subfigure[NFSP with $hs=0.9$]{\includegraphics[width=0.325\textwidth]{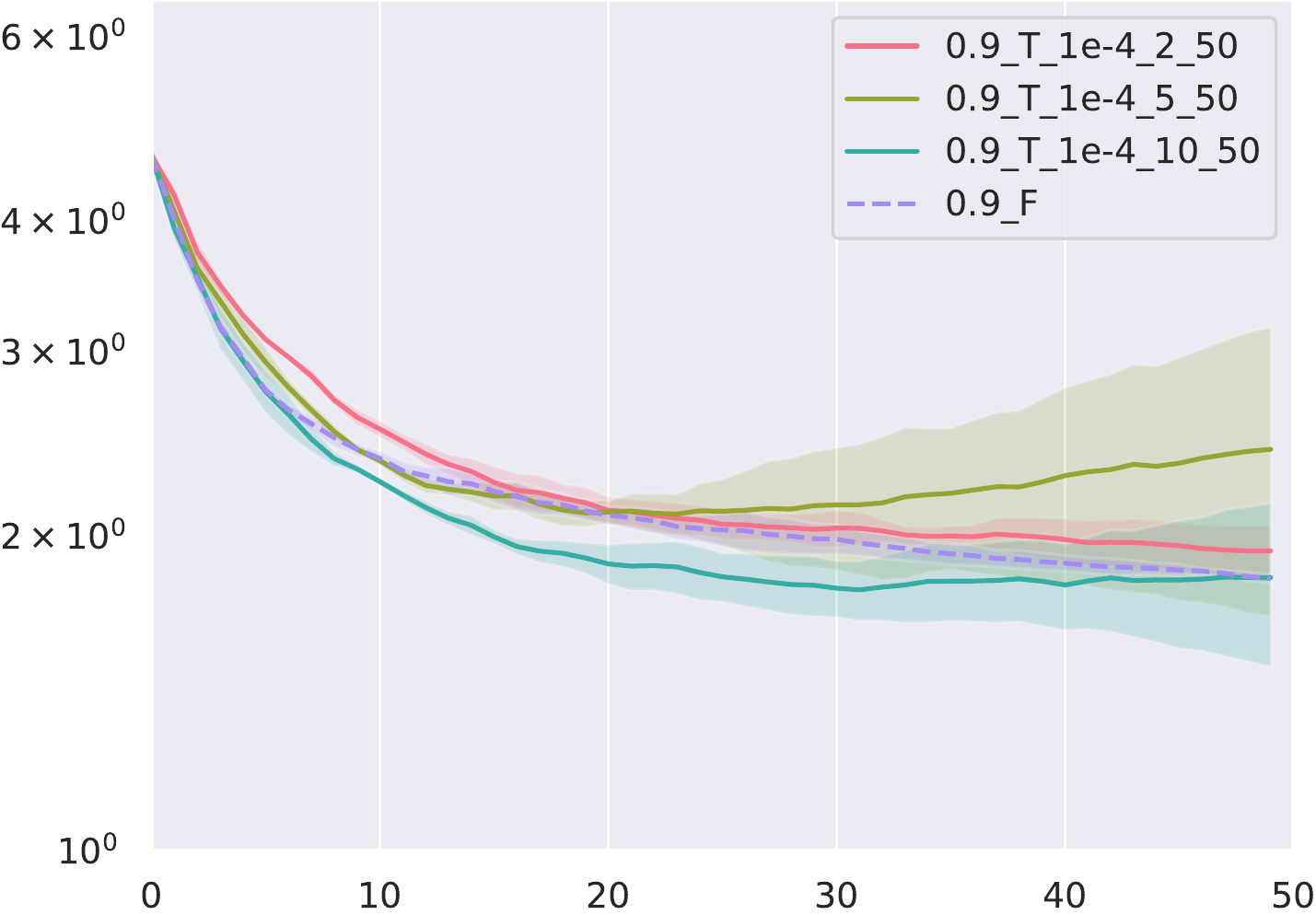}\label{fig:udef_nfsp_0_9}}
\subfigure[CFR with $hs=0.1$]{\includegraphics[width=0.325\textwidth]{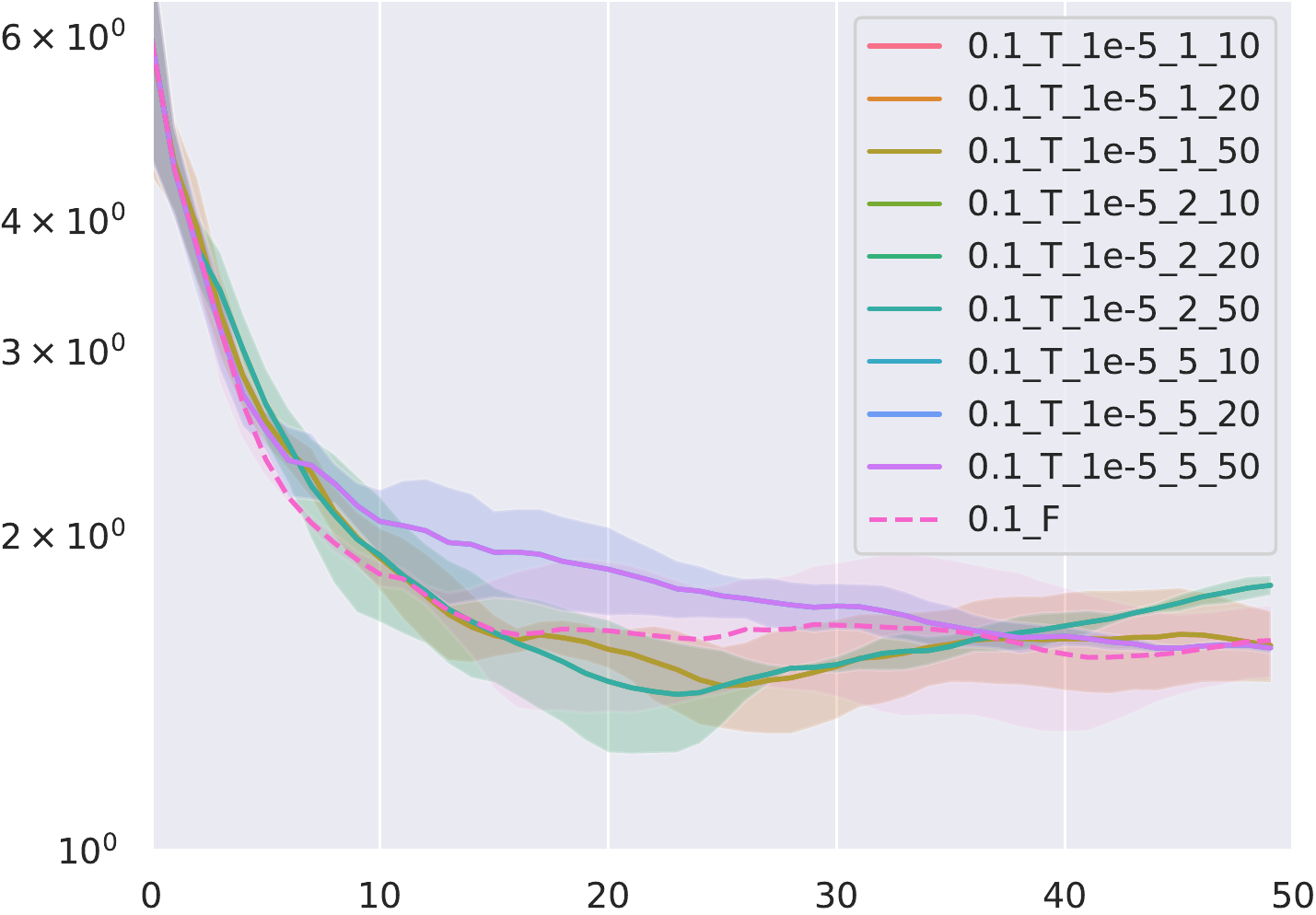}\label{fig:udef_cfr_0_1}}
\subfigure[CFR with $hs=0.5$]{\includegraphics[width=0.325\textwidth]{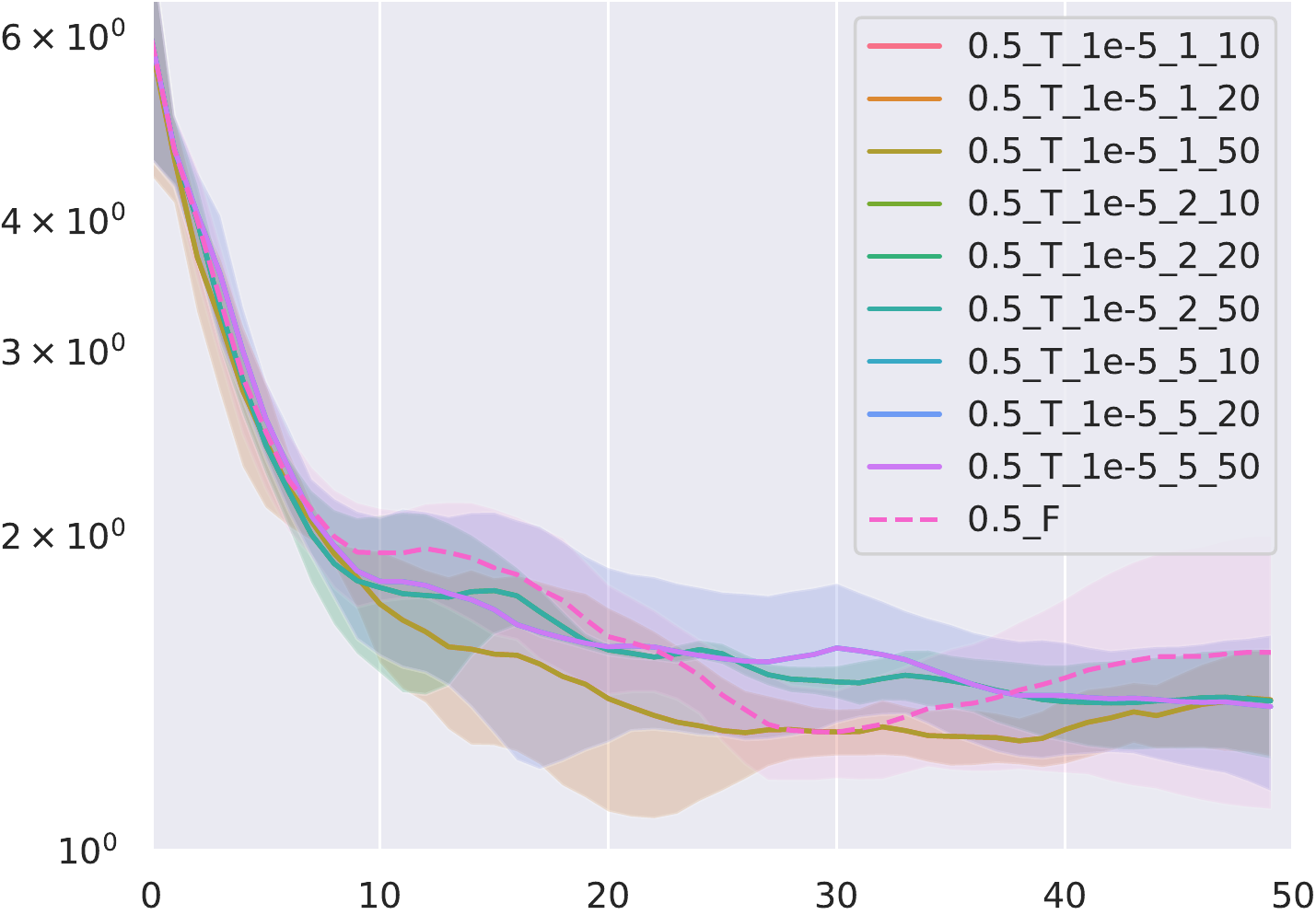}\label{fig:udef_cfr_0_5}}
\subfigure[CFR with $hs=0.9$]{\includegraphics[width=0.325\textwidth]{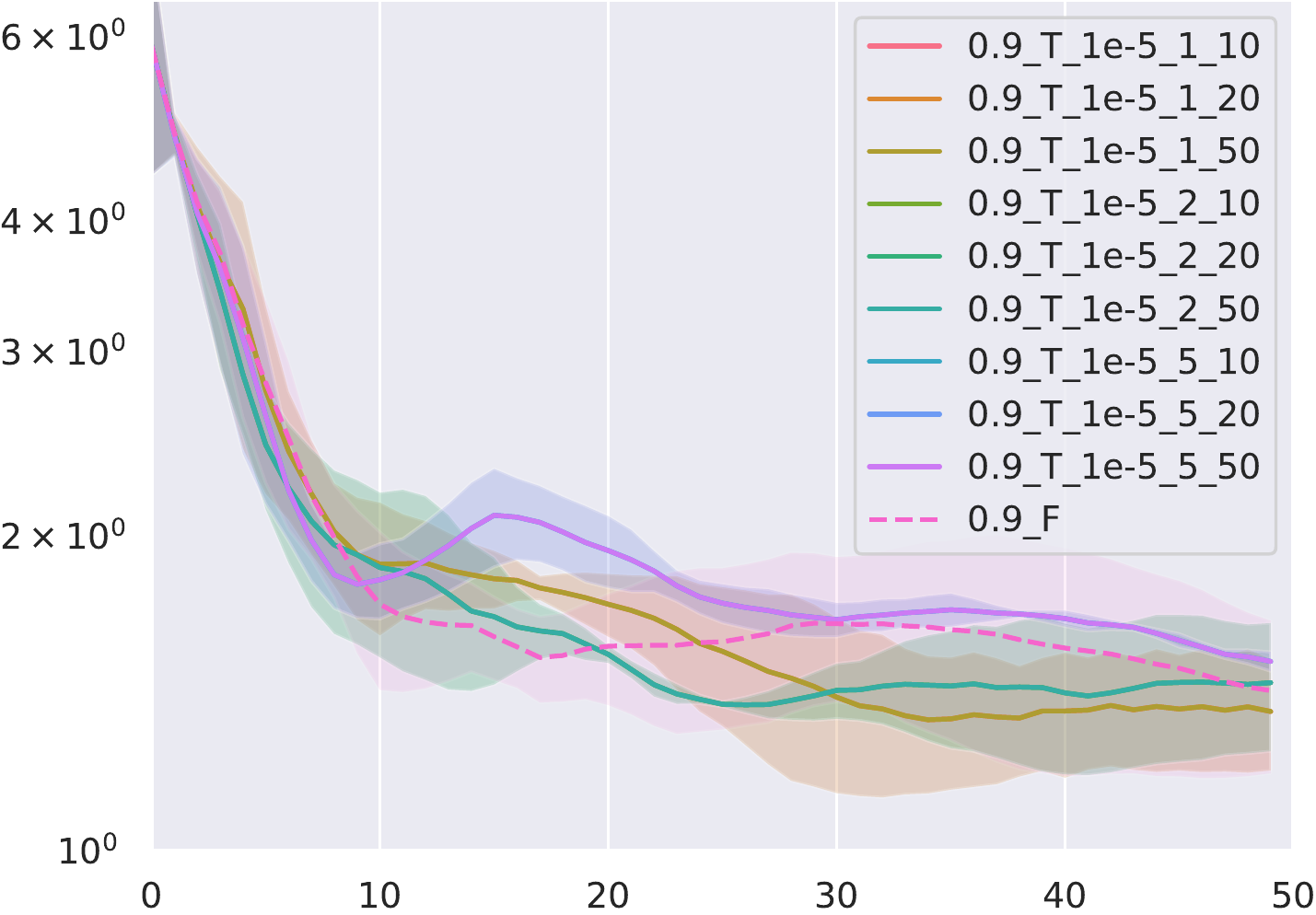}\label{fig:udef_cfr_0_9}}
\caption{Results of \textsc{NashConv} of UDEF with optimizing transform modules only in NFSP and Deep CFR. The legend in the figure should read as $\langle hs, meta\_train, meta\_lr, meta\_steps, meta\_train\_max\rangle$.}
\label{fig:tran_opt}
\end{figure*}

In this section, we present the experiment results of UDEF.
\subsection{Experiment Setup}
We focus on Leduc poker, which is a widely used experimental domain in equilibrium finding~\cite{southey2005bayes,lanctot2019openspiel,muller2019generalized}. We compare our UDEF algorithm to NFSP~\cite{heinrich2015fictitious} and Deep CFR~\cite{brown2019deep}. To this end, we pretrain the pre-transform and the post-transform modules with PSRO and CFR with a batch of 1000 for $10^4$ episodes. The DRL algorithm used in RL is DQN~\cite{mnih2015human}, where an online and a target Q networks are introduced to perform the TD-learning. The Q, RP and policy networks in RO are configured as multilayer perceptrons (MLPs) with one hidden layer containing 256, 64 and 256 hidden neurons, respectively. We use MSE loss and employ the Adam optimizer~\cite{kingma2015adam} with a default learning rate of $0.01$ for training the RO. The sampling policy generating new experiences uses $hs\in\{0.1, 0.5, 0.9\}$ to represent the approximation of PSRO and CFR and the combination of both. For the distillation of the policy, we train the policy for 5 episodes.  The two transform modules are MLPs with one hidden layer of 64 neurons. We choose the dimension of LAS as 16, which is more than twice the number of actions in Leduc poker, i.e., 3, and enables us to pretrain the transform modules with both PSRO and CFR methods simultaneously. Similarly to policy networks, we use MSE loss and Adam optimizer with a learning rate of $5\cdot 10^{-3}$ for pretraining. For the training of AO and transform modules, we choose $\eta=0.95$ to smooth the current average policy and the new response. Throughout the experiment, we vary the values of the learning rate, the update steps and the max number of iterations to optimize AO and transform modules. We denote these values as $meta\_lr$, $meta\_steps$ and $meta\_train\_max$, respectively. The values differ in different algorithms, as described in next sections. In each iteration, we sample the experiences from game with $10^4$ episodes for RO and $10^3$ episodes for augmenting the meta-game. The maximum number of iterations is 50. All results are averaged over 3 seeds. 

\begin{figure*}[ht]
\centering
\subfigure[NFSP with $hs=0.1$]{\includegraphics[width=0.325\textwidth]{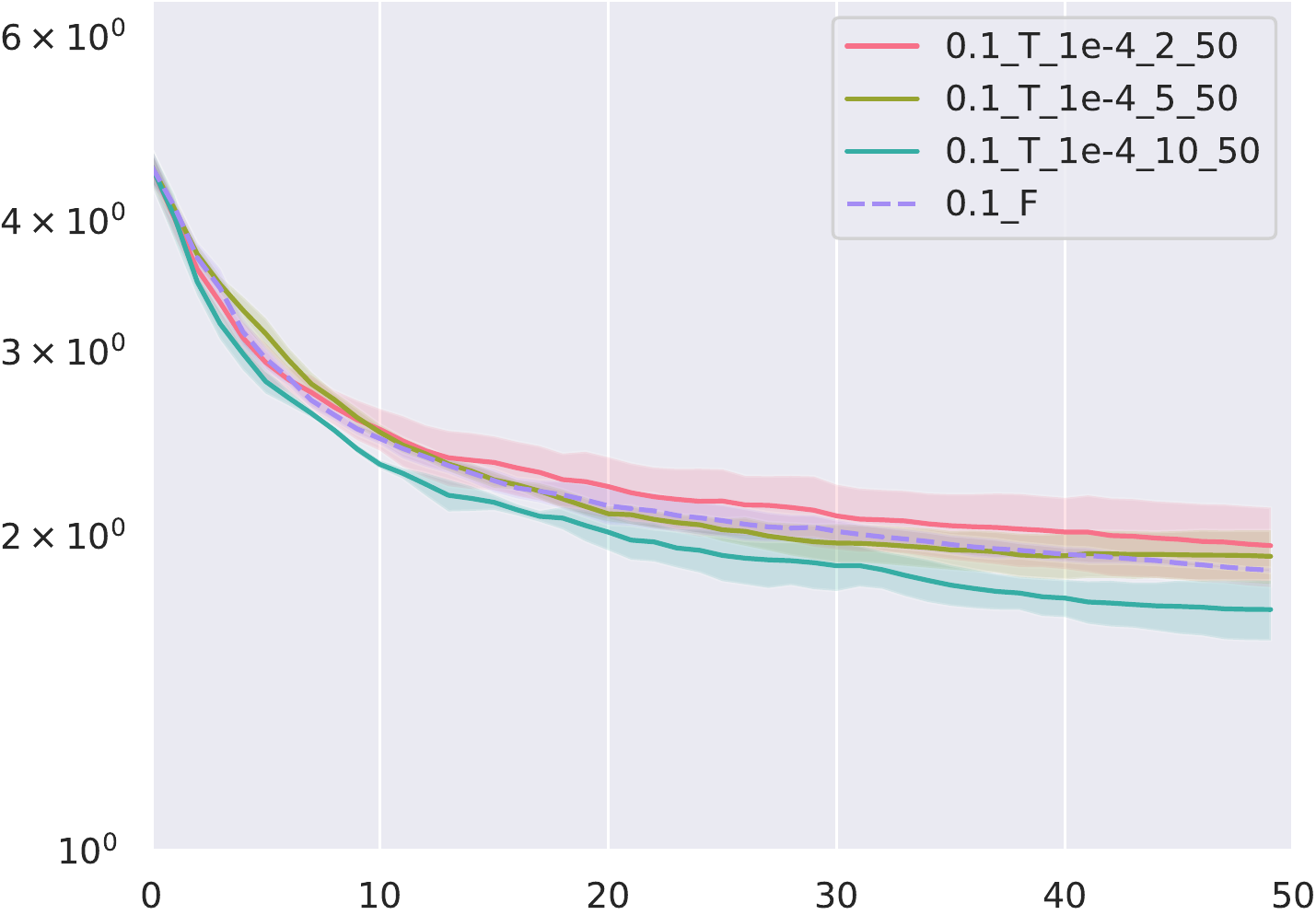}\label{fig:udef_neural_nfsp_0_1}}
\subfigure[NFSP with $hs=0.5$]{\includegraphics[width=0.325\textwidth]{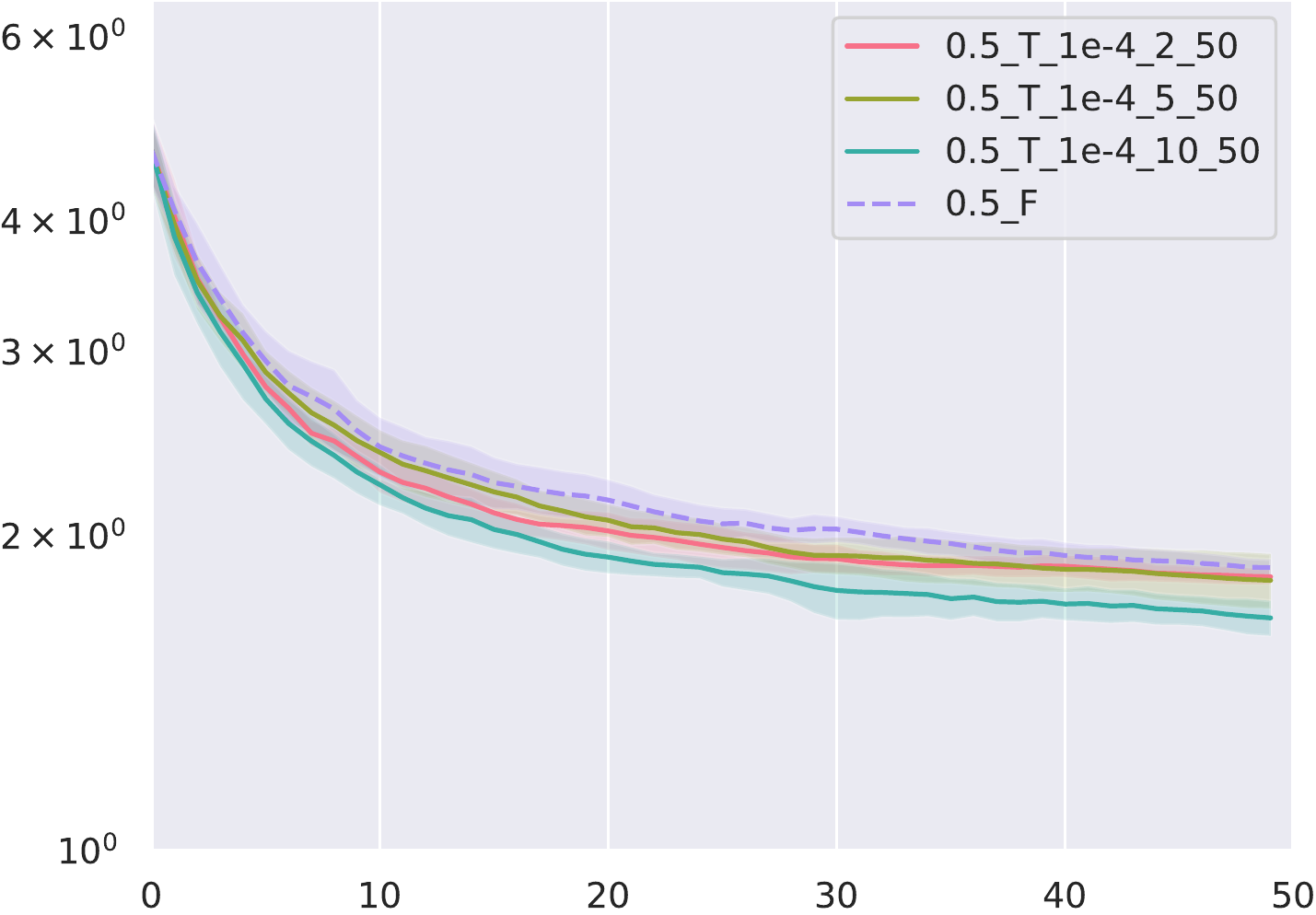}\label{fig:udef_neural_nfsp_0_5}}
\subfigure[NFSP with $hs=0.9$]{\includegraphics[width=0.325\textwidth]{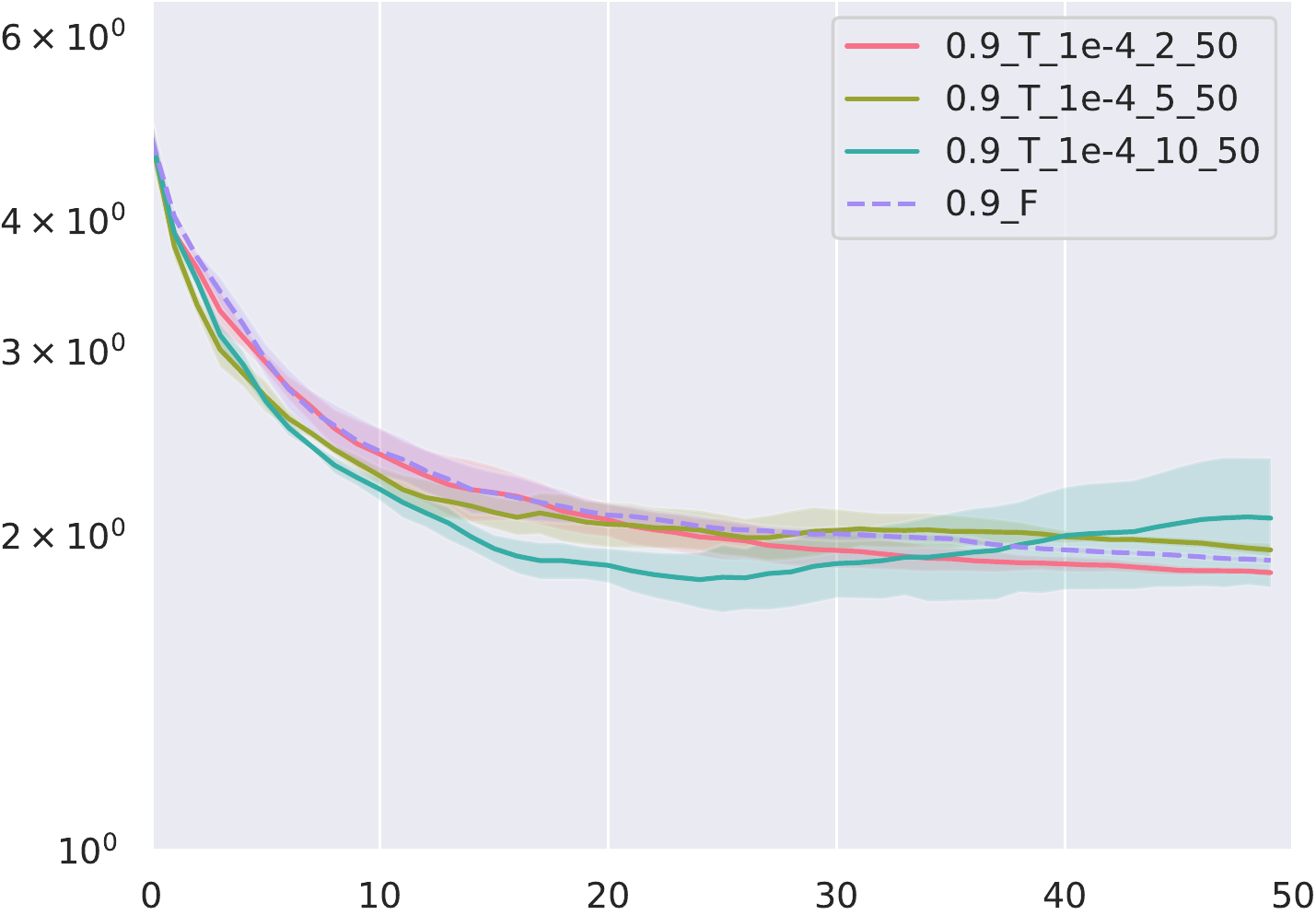}\label{fig:udef_neural_nfsp_0_9}}
\subfigure[CFR with $hs=0.1$]{\includegraphics[width=0.325\textwidth]{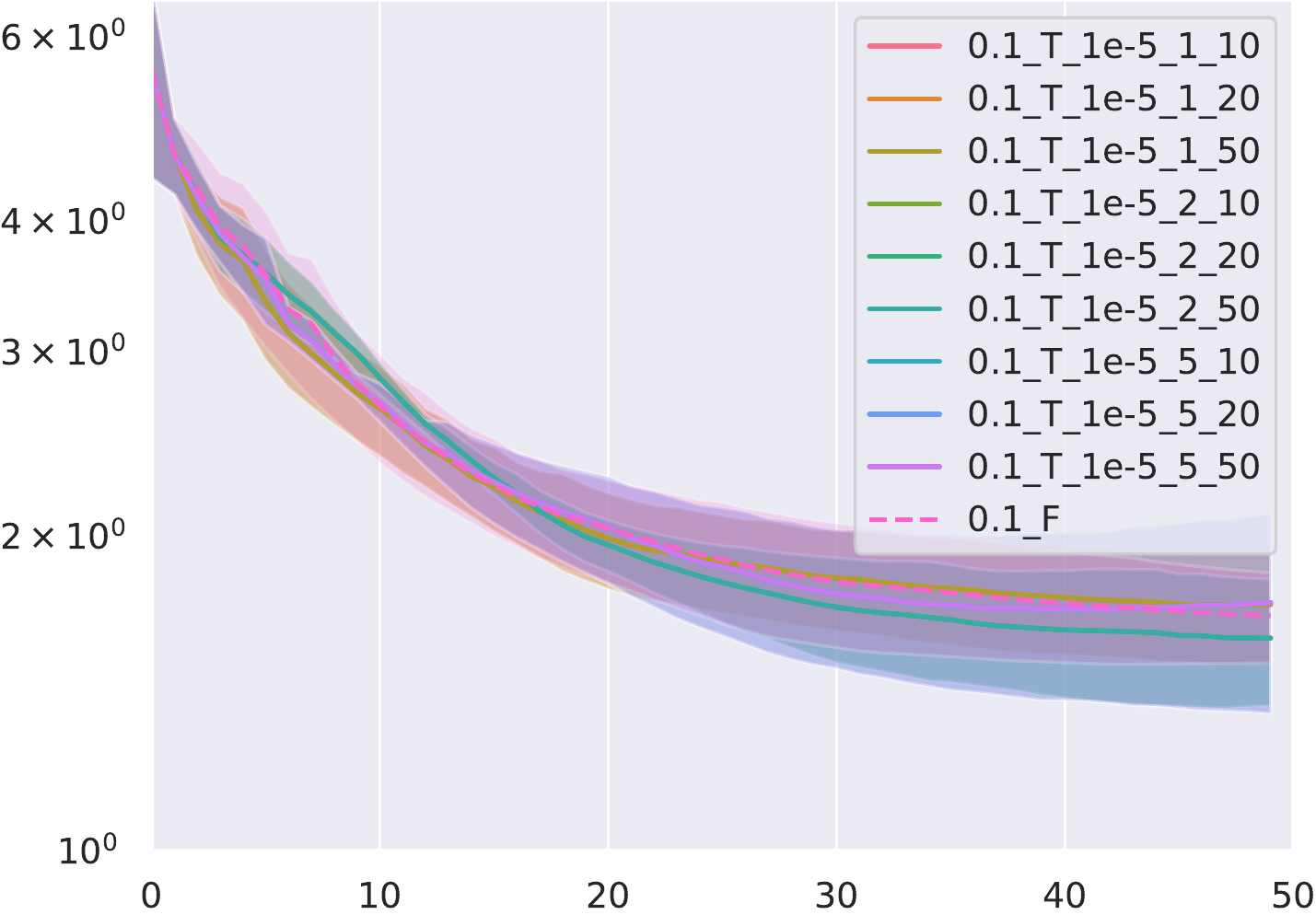}\label{fig:udef_neural_cfr_0_1}}
\subfigure[CFR with $hs=0.5$]{\includegraphics[width=0.325\textwidth]{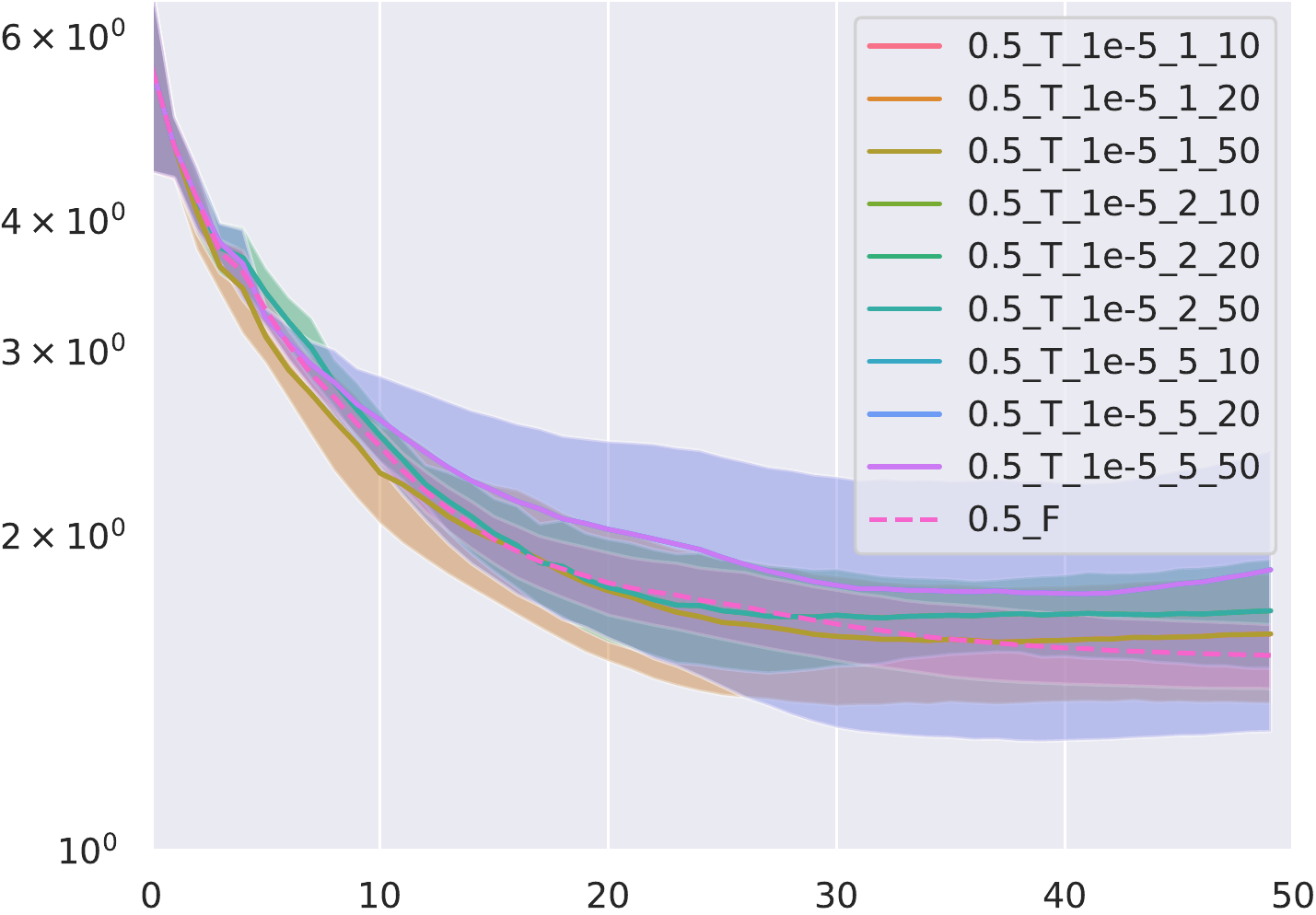}\label{fig:udef_neural_cfr_0_5}}
\subfigure[CFR with $hs=0.9$]{\includegraphics[width=0.325\textwidth]{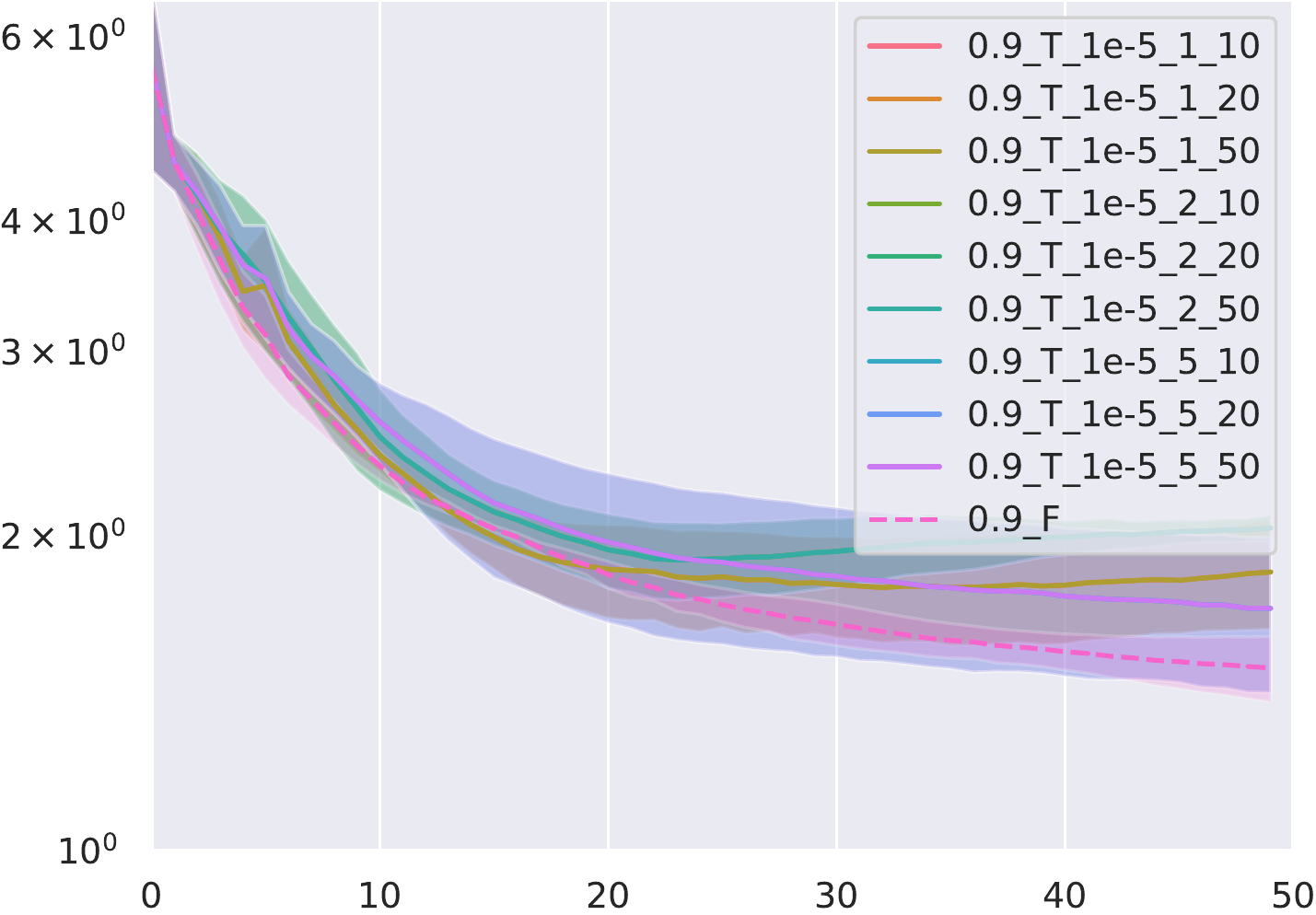}\label{fig:udef_neural_cfr_0_9}}
\caption{Results of \textsc{NashConv} of UDEF with optimizing transform modules and average oracles in NFSP and Deep CFR. The legend in the figure should read as $\langle hs, meta\_train, meta\_lr, meta\_steps, meta\_train\_max\rangle$.}
\label{fig:tran_opt_ao}
\end{figure*}
\subsection{Optimizing Transform Modules Only}
In the first experiment, we optimize the transform modules where the AOs are represented by explicit functions. The results are shown in Figure~\ref{fig:tran_opt}.  We select the NFSP and linear CFR with the neural network representation of the transform modules as our baselines, where linear CFR accelerates CFR by two orders of magnitude~\cite{brown2019solving}. For NFSP (the top row of Figure~\ref{fig:tran_opt}), we set $meta\_lr=10^{-4}$, $meta\_steps\in\{2, 5, 10\}$ and the optimization of the AO and transform modules is fixed during the training as $meta\_train\_max=50$. For $hs=0.1$ and $0.5$, we observe that the training outperforms the baseline for all values of $meta\_steps$. On the contrary, for $hs=0.9$ the training becomes more unstable: setting $meta\_steps=2$ or $5$ results in worse performance and $meta\_steps=10$ may only become competitive with the baseline. For the CFR results (the bottom row of Figure~\ref{fig:tran_opt}), we set $meta\_lr=10^{-5}$ as we observe that all results with $meta\_lr=10^{-4}$ as in NFSP are not stable and worse than the baseline, which implies that CFR is more sensitive to the optimizing of the transform modules than NFSP. The similar results are also observed with varying $meta\_steps\in\{1, 2, 5\}$. When $hs=0.5$ or $0.9$ and $meta\_steps=1$ and $2$, UDEF outperforms the baselines. We note that varying $meta\_train\_max$ does not influence the results much. 

\subsection{Optimizing Transform Modules and AOs}
In the second experiment, we optimize both transform modules and AOs. We focus on optimizing the LAO, where GAO is represented by an explicit function. The neural network structure of the LAO is Conv1D-based meta solver proposed in~\cite{feng2021discovering}. We randomly sample $10^6$ random games and compute the corresponding LAO values as labels to pretrain the LAO. In each iteration, we use the LAO to generate the values for averaging the responses and optimize LAO simultaneously with transform modules. The experimental results are shown in Figure~\ref{fig:tran_opt_ao}.

The results with NFSP (the top row of Figure~\ref{fig:tran_opt_ao}) suggest that the neural network representation of LAO increases instability during training, compared with the explicit functions, as the training of AO and transform modules influences more parameters of UDEF. Still, we observe that setting $meta\_steps$ properly results in UDEF outperforming the NFSP. Similarly as in previous section, the larger $hs$ gets the more unstable the training becomes. For $hs=0.5$, UDEF consistently outperforms the baseline.

With CFR (the bottom row of Figure~\ref{fig:tran_opt_ao}), UDEF outperforms the baseline when $hs=0.1$. However, for $hs$ equal to $0.5$ or $0.9$, UDEF becomes inferior in all configurations. This indicates that optimizing both AO and the transform modules makes the training extremely unstable. 

Overall, we make three key observations: i) increasing the value of $hs$, i.e., generating new experiences from historical responses, rather than from a new response, makes the algorithms unstable when optimizing AOs and the transform modules, ii) as linear CFR uses linear average as LAO, which puts more weights on later responses, linear CFR becomes more unstable compared to NFSP and may result in instability of training even without the optimization of AO and the transform modules, and iii) UDEF is sensitive to the hyperparameters of training -- more sophisticated hyperparameter selection methods, as well as the optimizing methods, is hence required to increase the representation power of neural networks in UDEF. 





\section{Conclusion}

In this work, we propose UDEF, a novel unified perspective of DEF, which leverages the powerful expressivity of neural networks to unify the two widely-used frameworks, i.e., PSRO and CFR. The four novel components of UDEF include: i) a novel response oracle, ii) two transform modules, iii) two average oracles, and iv) a novel optimization method to select the components of the two frameworks. Experiments demonstrate the effectiveness of UDEF. We believe that this work unifies and generalizes the two largely independent frameworks and may accelerate research on both frameworks from a unified perspective. 


\clearpage
\balance
\bibliographystyle{icml2022}
\bibliography{def}
\clearpage
\nobalance
\appendix

\section{Oracles and Modules}
The illustrative figures of different oracles and modules are displayed in Figure~\ref{fig:udef_oracles_and_modules}.
\begin{figure}[H]
\centering
\subfigure[RO]{
\includegraphics[width=0.175\textwidth]{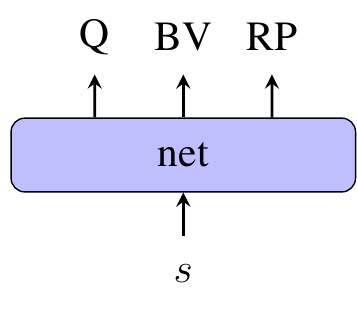}
\label{fig:ro}
}
\subfigure[AO]{
\includegraphics[width=0.175\textwidth]{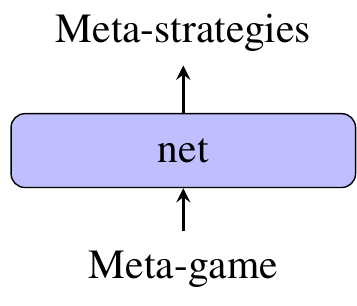}
\label{fig:ao}
}
\subfigure[Pre-tran.]{
\includegraphics[width=0.175\textwidth]{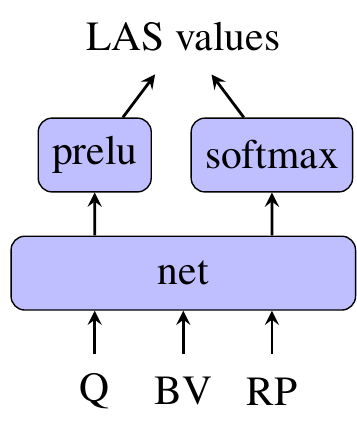}
\label{fig:pre}
}
\subfigure[Post-tran.]{
\includegraphics[width=0.175\textwidth]{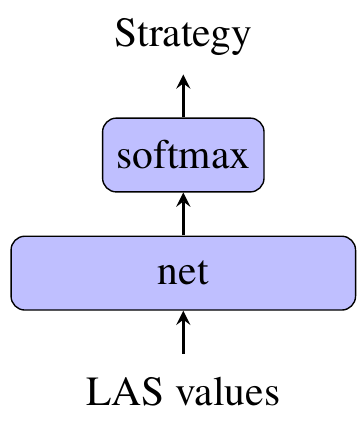}
\label{fig:post}
}
\caption{Design of neural networks in UDEF. We note that both LAO and GAO share the same structure.}
\label{fig:udef_oracles_and_modules}
\end{figure}

\newpage
\section{Details of Pretraining Procedure}

In this section, we present the detailed pseudo codes of different subroutines in UDEF. 
\begin{algorithm}[ht]
\caption{Pretraining of Average Oracle}
\begin{algorithmic}[1]
\STATE Initialize the AO net's parameters $\psi$
\WHILE{not terminate}
\STATE Determine the number of actions, $k$
\STATE Randomly sample a batch of games, $\bm{X}$
\STATE Compute the solutions of $\bm{X}$ as labels $\bm{y}$
\STATE Using  $\langle\bm{X}, \bm{y}\rangle$ to train $\psi$
\ENDWHILE
\end{algorithmic}
\end{algorithm}

If the dimension of LAS is larger than or equal to the number of actions, we can select a set of neurons to train the transform modules independently. We can also train the two modules jointly by using AO. 
\begin{algorithm}[ht]
\caption{Pretraining of Transform Modules}
\begin{algorithmic}[1]
\STATE Given the pretrained AO $\psi$, the pre-transform and post-transform modules $\phi_{1}$ and $\phi_{2}$
\WHILE{not terminate}
\STATE Randomly sample a batch of Q values and reaching probabilities $\bm{X}$
\STATE Compute the solutions under $\psi$ of $\bm{X}$ as labels $\bm{y}$
\STATE Using  $\langle\bm{X}, \bm{y}\rangle$ to train $\phi_{1}$ and $\phi_{2}$
\ENDWHILE
\end{algorithmic}
\end{algorithm}

\end{document}